\documentclass[journal,twoside,web]{ieeecolor}
\usepackage{generic}
\usepackage{cite}
\usepackage{amsmath,amssymb,amsfonts}
\usepackage{algorithmic}
\usepackage{graphicx}
\usepackage{algorithm,algorithmic}
\usepackage{hyperref}
\usepackage{textcomp}
\def\BibTeX{{\rm B\kern-.05em{\sc i\kern-.025em b}\kern-.08em
    T\kern-.1667em\lower.7ex\hbox{E}\kern-.125emX}}
\usepackage{cuted}
\newtheorem{lemma}{Lemma}
\newtheorem{remark}{Remark}
\newtheorem{theorem}{Theorem}
\newtheorem{assumption}{Assumption}

\begin{document}
\title{Robust Macroscopic Density Control of Heterogeneous Multi-Agent Systems}
%
%
%
\author{Gian Carlo Maffettone, Davide Salzano, and Mario di Bernardo
\thanks{Gian Carlo Maffettone and Davide Salzano contributed equally to this work (Corresponding author: Mario di Bernardo).}
\thanks{Gian Carlo Maffettone is with the Modeling and Engineering Risk and Complexity program of the Scuola Superiore Meridionale, Naples, Italy (e-mails: giancarlo.maffettone@unina.it, gc.maffettone@ssmeridionale.it). }
\thanks{Davide Salzano and Mario di Bernardo are with the Department of Electrical Engineering and Information Technology of the University of Naples Federico II, Naples, Italy (e-mails: davide.salzano@unina.it,m mario.dibernardo@unina.it).}
\thanks{Mario di Bernardo is also with the Modeling and Engineering Risk and Complexity program of the Scuola Superiore Meridionale, Naples, Italy.}}
\maketitle
\begin{abstract}
Modern applications, such as orchestrating the collective behavior of robotic swarms or traffic flows, require the coordination of large groups of agents evolving in unstructured environments, where disturbances and unmodeled dynamics are unavoidable. In this work, we develop a scalable macroscopic density control framework in which a feedback law is designed directly at the level of an advection--diffusion partial differential equation. We formulate the control problem in the density space and prove global exponential convergence towards the desired behavior in $\mathcal{L}^2$ with guaranteed asymptotic rejection of bounded unknown drift terms, 
explicitly accounting for heterogeneous agent dynamics, unmodeled behaviors, 
and environmental perturbations. Our theoretical findings are corroborated by numerical experiments spanning heterogeneous oscillators, traffic systems, and swarm robotics in partially unknown environments.
\end{abstract}

\begin{IEEEkeywords}
Density control,  distributed parameter systems, macroscopic modeling, multi-agent systems, robust control.
\end{IEEEkeywords}

\thispagestyle{empty}

\section{Introduction}
\label{sec:introduction}
\IEEEPARstart{L}{arge-scale} multi-agent systems are ubiquitous in modern society, with applications ranging from swarm coordination in robotics and biology \cite{dorigo2021swarm, massana2022rectification, palacci2013living, salzano2025vivo}, to traffic control \cite{siri2021freeway} and environmental management \cite{zahugi2013oil, pashna2020autonomous}. 
In such systems, macroscopic collective behaviors typically emerge from microscopic interactions among a large number of agents. 
Classic microscopic (i.e., agent-based) control approaches aim at orchestrating collective behavior by designing local interaction and communication protocols based on individual agent dynamics \cite{oh2015survey, cortes2009global}. 
For instance, in \cite{della2020stochastic}, synchronization of a large stochastic population is achieved by controlling a small fraction of agents, while in \cite{jiang2018fully} an ensemble of heterogeneous systems is guided toward a prescribed formation via fully distributed controllers.
These works demonstrate that distributed control strategies can effectively induce collective behaviors even in the presence of stochasticity and heterogeneity. 
However, microscopic controllers are typically limited to relatively simple macroscopic objectives, such as geometric formations \cite{giusti2023distributed}, synchronization \cite{yu2009pinning}, or flocking \cite{beaver2021overview}. Moreover, providing analytical guarantees of stability and robustness becomes increasingly challenging as the population size grows \cite{d2023controlling}.

A recent paradigmatic shift in multi-agent control leverages macroscopic descriptions of collective dynamics. Within this perspective, which forms the basis of multi-scale control methodologies for large agent populations as recently proposed in \cite{d2023controlling,maffettone2022continuification}, collective behavior is captured through the evolution of macroscopic observables \cite{elamvazhuthi2019mean, lama2025nonreciprocal}. For instance, the spatial organization of mobile agents can be more compactly represented by a density function than by tracking individual states. 
Starting from agent-level dynamics and interactions, continuum models based on partial differential equations (PDEs) are derived to describe the evolution of the population density. Control laws can then be designed at the macroscopic level and systematically bridged to microscopic actuation through discretization and spatial sampling. This macro-to-micro architecture, often referred to as continuification control \cite{nikitin2021continuation, maffettone2022continuification}, draws upon tools from PDE control \cite{krstic2008boundary} and mean-field control \cite{fornasier2014mean, ascione2023mean}, and has enabled the regulation of complex collective behaviors in very large populations \cite{maffettone2024mixed, nikitin2021boundary, albi2021optimized, lin2025heterogeneous, boldini2024stigmergy}.
In particular, density control frameworks \cite{chen2023density} allow the specification of rich macroscopic objectives that go well beyond classical tasks such as synchronization, formation control, and flocking. When embedded within a multi-scale control architecture, such macroscopic designs provide scalable strategies with analytical guarantees that can be preserved under microscopic implementation.

Despite these advantages, most existing macroscopic density control methods rely on simplified agent models and typically neglect heterogeneity, unmodeled dynamics, and external disturbances. In particular, most available approaches assume known and homogeneous agent dynamics and provide only limited robustness characterizations, often in the form of bounded steady-state errors, rather than explicit feedback designs guaranteeing asymptotic regulation under uncertainty. This limitation becomes particularly critical in multi-scale macro-to-micro control architectures, where macroscopic guarantees must remain valid after discretization and microscopic actuation.

Motivated by these limitations, in this paper we develop a continuification-based control architecture for the robust regulation of multi-agent systems affected by bounded unmodeled dynamics and disturbances. Specifically, we formulate a density tracking problem for large populations of stochastic and heterogeneous agents and, by exploiting a continuum PDE description of the population dynamics, we construct macroscopic upper and lower bounding systems that account for unknown drift terms. A unified macroscopic feedback control law is then designed using Lyapunov techniques and incorporates a switching component to ensure robustness with respect to all admissible perturbations within the prescribed bounds. The effectiveness and versatility of the proposed framework are demonstrated through numerical studies involving heterogeneous oscillators, traffic systems, and robotic swarms operating in partially unknown environments.

\subsection{Related Work and Main Contributions}

Density control of large-scale ensembles has been addressed through several methodological paradigms. 
In mean-field optimal control, the problem is formulated as a dynamic optimization, where cost functionals are defined in terms of the population density and are constrained by PDE models of density evolution \cite{fornasier2014mean}. 
Recent developments include sparse control in the presence of additive noise \cite{ascione2023mean} and extensions to multi-population systems \cite{albi2022mean, albi2024kinetic}. 
While these approaches establish well-posedness of optimal solutions, they typically do not yield explicit feedback control laws.

Related optimization-based formulations also arise in optimal transport \cite{villani2021topics} and Schr\"odinger bridge problems \cite{christian2014Asurvey}, where Wasserstein-type distances are minimized. 
Within this context, explicit feedback density control solutions have been derived in simplified linear-quadratic settings \cite{chen2023density}, and more recent efforts have explored distributed optimization-based implementations \cite{brumali2025distributed}. 

An alternative line of research focuses on the direct design of macroscopic feedback control laws with theoretical convergence guarantees \cite{maffettone2022continuification, nikitin2021continuation, fueyo2025continuation}. 
These works develop model-based controllers for first- and second-order agent dynamics, with extensions to leader--follower architectures \cite{maffettone2025leader} and distributed implementations \cite{dilorenzo2025distributed}. 
However, these approaches generally assume nominal agent dynamics and do not explicitly address robustness with respect to heterogeneity and unmodeled disturbances. 
In \cite{maffettone2023continuification2}, an upper bound on the steady-state tracking error under perturbations is derived, but no feedback mechanism is designed to actively reject such disturbances.
Beyond the multi-agent setting, sliding-mode techniques have been extended to infinite-dimensional systems to achieve disturbance rejection in parabolic PDEs. In particular, in~\cite{cristofaro2019robust}, a regularized infinite-dimensional sliding-mode controller is designed for quasilinear reaction--diffusion equations subject to unknown bounded perturbations, achieving robust practical stability with accuracy controlled by the regularization parameter. While that approach addresses a different class of systems (reaction--diffusion equations with Dirichlet boundary conditions), the underlying regularization methodology for handling discontinuous control actions in infinite-dimensional settings is adopted in the present work to establish well-posedness of the proposed control law.

In contrast to existing macroscopic density control approaches, we explicitly account for bounded unknown drift terms by constructing continuum upper and lower bounding density dynamics and designing a single macroscopic feedback law that guarantees asymptotic stabilization of the density error for all admissible perturbations within the prescribed bounds.

{The main contributions of this work are:
\begin{itemize}
\item the derivation of continuum upper and lower bounding density models for large-scale stochastic multi-agent systems affected by unknown but bounded drift terms, capturing both heterogeneity and environmental disturbances;
\item the design of a Lyapunov-based macroscopic feedback controller that guarantees global exponential convergence of the density tracking error in $\mathcal{L}^2$ for the bounding dynamics, and hence for the nominal population via comparison arguments;
\item an extensive numerical validation demonstrating performance and adaptability across heterogeneous oscillators, traffic systems, and robotic swarms operating in partially unknown environments.
\end{itemize}}

{The proposed strategy follows a continuification paradigm \cite{maffettone2022continuification, maffettone2024mixed}, whereby control is designed at the level of the density evolution equation and subsequently discretized to generate microscopic control inputs for individual agents.}

The rest of this paper is organized as follows. 
Sections~\ref{sec:math_modeling} and~\ref{sec:problem_statement} introduce the mathematical modeling framework and formulate the control problem in a one-dimensional setting. 
Section~\ref{sec:control_design} presents the proposed control architecture and the associated stability analysis. 
Section~\ref{sec:1D_Validation} reports numerical validations in representative one-dimensional scenarios. 
Section~\ref{sec:2D_derivation_and_validation} extends the framework to multidimensional domains and presents additional numerical studies. {In Section~\ref{sec:applications}, we showcase the versatility of our control framework for two relevant applications, traffic control in a ring domain, and swarm robotics in the presence of unknown environments.}  
Finally, Section~\ref{sec:Discussion} concludes the paper and outlines future research directions.

\section{Mathematical Modeling}
\label{sec:math_modeling}

We consider a population of $N$ stochastic dynamical systems (agents) evolving over a one-dimensional spatial domain $\Omega \subset \mathbb{R}$. 
Let $x_i(t) \in \Omega$ denote the position of agent $i \in \{1,\dots,N\}$ at time $t$, and let $u_i(t)$ be a control input acting on its velocity. 
A nominal first-order stochastic model for each agent is given by
\begin{equation}\label{eq:single_integrator}
    \mathrm{d}x_i(t) = u_i(t)\,\mathrm{d}t + \sqrt{2D}\,\mathrm{d}W_i(t), \quad i \in \{1,\dots,N\},
\end{equation}
where $W_i(t)$ is a standard one-dimensional Wiener process and $D \geq 0$ is the diffusion coefficient.

Model~\eqref{eq:single_integrator} assumes that the drift of each agent is entirely determined by the control input. 
In practice, however, agent dynamics are typically affected by unmodeled internal behaviors, interactions with the environment, or external disturbances. 
To account for such effects, we adopt the more general stochastic dynamics
\begin{multline}\label{eq:micro_system}
    \mathrm{d}x_i(t) = \bigl[u_i(t) + g_i(t, X(t))\bigr]\mathrm{d}t + \sqrt{2D}\,\mathrm{d}W_i(t), 
    \quad \\
    i \in \{1,\dots,N\},
\end{multline}
where $X(t) = [x_1(t),\dots,x_N(t)]^\top \in \Omega^N$ collects the states of all agents, and 
$g_i : \mathbb{R}_{\geq 0} \times \Omega^N \to \mathbb{R}$ represents unknown drift terms capturing unmodeled dynamics, inter-agent interactions, and environmental perturbations. 
This formulation naturally accommodates heterogeneous agent behaviors, since the functions $g_i$ may differ across agents.

When the number of agents is sufficiently large, it is convenient to describe the collective organization in terms of a macroscopic density function 
$\rho : \Omega \times \mathbb{R}_{\geq 0} \to \mathbb{R}_{\geq 0}$, 
such that for any measurable subset $\mathcal{A} \subset \Omega$, the integral $\int_{\mathcal{A}} \rho(x,t)\,\mathrm{d}x$ represents the fraction of agents located in $\mathcal{A}$ at time $t$. 
The evolution of $\rho$ will be described by a continuum model derived from the microscopic dynamics~\eqref{eq:micro_system} in the next sections.

\section{Problem Statement}
\label{sec:problem_statement}

Our objective is to regulate the macroscopic collective behavior of the population. 
Specifically, we seek control laws $u_i(t)$ such that, starting from arbitrary initial conditions $x_i(0)=x_{i,0}$, $i\in\{1,\dots,N\}$, corresponding to an initial density profile $\rho(x,0)=\rho_0(x)$, the population density asymptotically converges to a prescribed, time-invariant desired density $\rho^{\mathrm d}:\Omega \to \mathbb{R}_{\geq 0}$ satisfying
\begin{align}
\int_{\Omega} \rho^{\mathrm d}(x)\,\mathrm{d}x = 1.
\end{align}
Formally, the control objective is
\begin{equation}
    \label{eq:control_objective}
    \lim_{t \to \infty} \left\Vert \rho^{\mathrm d}(\cdot)-  \rho(\cdot,t)  \right\Vert_{2} = 0,
\end{equation}
where $\|\cdot\|_{2}$ denotes the $\mathcal{L}^2(\Omega)$ norm. The dot notation for the arguments of the functions involved in the norm highlights the variable with respect to which the norm is computed.

We introduce the following assumptions.
\begin{assumption}
\label{ass:omega_boundary_conditions}
The domain $\Omega$ is closed, simply connected, and its boundary, namely $\partial \Omega$, is locally Lipschitz. Boundary conditions are chosen so that agents cannot leave the domain, e.g., periodic or reflective (no-flux) boundary conditions.
\end{assumption}

\begin{remark}
Since $\Omega \subset \mathbb{R}$ is closed and simply connected, it can be equivalently written as $\Omega = [-a,a]$ for some $a>0$. 
Assuming symmetry with respect to the origin simplifies notation without loss of generality.
\end{remark}

\begin{assumption}
\label{ass:g_bounded}
The unknown drift terms affecting the agents are uniformly bounded, i.e.,
\[
|g_i(t,X)| \le K, \quad \forall\, t \ge 0,\ \forall\, X \in \Omega^N,\ \forall\, i \in \{1,\dots,N\},
\]
for some known constant $K>0$.
\end{assumption}

\begin{assumption}
\label{ass:desired_density}
The desired density satisfies $\rho_x^{\mathrm d},\, \rho_{xx}^{\mathrm d} \in \mathcal{L}^\infty(\Omega)$.
Equivalently, the essential suprema $\|\rho_x^{\mathrm d}\|_\infty$ and $\|\rho_{xx}^{\mathrm d}\|_\infty$ are finite.
\end{assumption}

Under Assumption~\ref{ass:omega_boundary_conditions}, the macroscopic density dynamics are mass conservative. 
Consequently, the total population mass satisfies
\begin{equation}
\label{eq:mass_conserved}
\int_{\Omega} \rho(x,t)\,\mathrm{d}x = 1, \qquad \forall\, t \ge 0,
\end{equation}
which is consistent with the normalization of the desired density.

\section{Control design}
\label{sec:control_design}
In this section, we design a control law to solve \eqref{eq:control_objective}. 
Our approach consists of four main steps. 
First, we construct stochastic upper and lower bounding systems for the uncertain microscopic dynamics \eqref{eq:micro_system}. 
Second, we derive continuum macroscopic models describing the spatio--temporal evolution of the densities associated with such bounding systems. 
Third, we design a robust macroscopic feedback law for the resulting density dynamics. 
Finally, the macroscopic control field is discretized and spatially sampled to generate microscopic control inputs for the individual agents. This procedure follows the continuification paradigm described in \cite{maffettone2022continuification}

\subsection{Bounding system}\label{sec:bounding_sys}
Under Assumption~\ref{ass:g_bounded}, the comparison lemma for stochastic differential equations \cite[Chapter~5.3]{skorokhod1982studies} can be applied to the microscopic dynamics \eqref{eq:micro_system}. 
In particular, suppose that $\underline x_i(0) \le x_i(0) \le \bar x_i(0)$ for all $i\in\{1,\dots,N\}$, and consider the auxiliary stochastic systems
\begin{subequations}\label{eq:bounding_sys_micro}
\begin{align}
    \mathrm{d} \underline{x}_i(t) &= \big[u_i(t) - K\big]\mathrm{d}t + \sqrt{2D}\,\mathrm{d}W_i(t), \quad i \in \{1,\dots,N\}, \label{eq:lowerbound_micro}\\
    \mathrm{d} \bar x_i(t) &= \big[u_i(t) + K\big]\mathrm{d}t + \sqrt{2D}\,\mathrm{d}W_i(t), \quad i \in \{1,\dots,N\}. \label{eq:upperbound_micro}
\end{align}
\end{subequations}
Then, for all $t \ge 0$ and all $i\in\{1,\dots,N\}$,
\begin{equation}\label{eq:comparison_lemma}
    \underline{x}_i(t) \le x_i(t) \le \bar x_i(t).
\end{equation}

For large populations, the collective behavior of the bounding systems can be described in terms of their associated macroscopic densities 
$\underline{\rho},\, \bar\rho : \Omega \times \mathbb{R}_{\ge 0} \to \mathbb{R}_{\ge 0}$, 
which will be shown to satisfy suitable continuum advection--diffusion equations (Fokker-Planck) in the next subsection.

\begin{remark}
If, under a given control law, both bounding densities satisfy
$\underline{\rho}(\cdot,t)\to\rho^\mathrm{d}$ and $\bar\rho(\cdot,t)\to\rho^\mathrm{d}$ as $t\to\infty$, then convergence of the nominal density $\rho(\cdot,t)\to\rho^\mathrm{d}$ follows directly from \eqref{eq:comparison_lemma}.
\end{remark}

\subsection{Macroscopic model}
For very large populations (i.e., as $N \to \infty$), following \cite{nikitin2021continuation,bernoff2011primer,leverentz2009asymptotic}, 
the spatio--temporal evolution of the densities associated with the bounding systems \eqref{eq:bounding_sys_micro} can be described by the advection--diffusion equations
\begin{subequations}\label{eq:continuum_sys}
\begin{align}
        \underline{\rho}_t(x, t) + \big[\underline{\rho}(x, t)\big(U(x,t) - K \big)\big]_x &= D \underline{\rho}_{xx}(x, t), \label{eq:lower_bounding_pde}\\
        \bar{\rho}_t(x, t) + \big[\bar\rho(x, t)\big(U(x,t) + K \big)\big]_x &= D \bar\rho_{xx}(x, t), \label{eq:upper_bounding}
\end{align}
\end{subequations}
where $x \in \Omega$ and $U : \Omega \times \mathbb{R}_{\ge 0} \to \mathbb{R}$ denotes the macroscopic velocity field corresponding to the microscopic inputs $u_i$.

For notational convenience, both bounding dynamics can be compactly written as
\begin{equation}\label{eq:bounding_pde}
    \hat{\rho}_t(x, t) + \big[\hat\rho(x, t)\big(U(x,t) \pm K \big)\big]_x = D \hat\rho_{xx}(x, t),
\end{equation}
where the $\pm$ sign corresponds to the upper and lower bounding systems, respectively.
We consider initial conditions $\hat{\rho}(x, 0)=\hat{\rho}_0(x)$ taken in the Sobolev space $H^2(\Omega)\equiv W^{2,2}(\Omega)$, that is, the space of functions with a $\mathcal{L}^2(\Omega)$ second-order spatial derivative. This implies interpreting derivatives in \eqref{eq:bounding_pde} in a weak sense. However, sampling initial conditions in $\mathcal{C}^2(\Omega)$, and assuming a sufficiently smooth control action allow to interpret derivatives in a strong sense \cite{quarteroni2009numerical}.

To enforce mass conservation as required by Assumption~\ref{ass:omega_boundary_conditions}, 
we impose zero-flux boundary conditions on \eqref{eq:bounding_pde}, namely
\begin{equation}\label{eq:no_flux_at_boundaries}
\begin{cases}
    \big[D\hat{\rho}_x(x, t)\pm K\hat{\rho}(x, t)\big]_{-a}^a = 0,\\
    \big[\hat{\rho}(x, t)U(x, t)\big]_{-a}^a = 0.
\end{cases}
\end{equation}

\begin{remark}
Boundary conditions \eqref{eq:no_flux_at_boundaries} are satisfied, for instance, under periodic boundary conditions on both $\hat\rho$ and $U$, or under reflective (no-flux) boundary conditions, i.e.,
\begin{align}
\begin{cases}
    \big[D\hat{\rho}_x(x, t)\pm K\hat{\rho}(x, t)\big]\big|_{x=\pm a} = 0,\\
    \big[\hat{\rho}(x, t)U(x, t)\big]\big|_{x=\pm a} = 0.
\end{cases}
\end{align}
\end{remark}

\subsection{Macroscopic control design}

We now design a macroscopic feedback law for the bounding density dynamics \eqref{eq:bounding_pde}. 
Let us define the density tracking error as
\begin{equation}\label{eq:error_def}
    e(x, t) := \rho^{\mathrm{d}}(x) - \hat{\rho}(x,t).
\end{equation}
Since $\rho^{\mathrm{d}}$ is time-invariant, differentiating \eqref{eq:error_def} in time and using \eqref{eq:bounding_pde} yields
\begin{equation}\label{eq:error_dyn_rho}
    e_t(x, t)
    = \big[\hat{\rho}(x, t)U(x, t)\big]_x \pm K\hat{\rho}_x(x, t) - D\hat{\rho}_{xx}(x, t).
\end{equation}
Recalling that $\hat{\rho} = \rho^{\mathrm{d}} - e$, the error dynamics can be rewritten as
\begin{equation}\label{eq:error_dynamics}
    e_t(x, t) = q(x, t) \pm K\rho^{\mathrm{d}}_x(x) \mp K e_x(x,t)
    - D \rho^{\mathrm{d}}_{xx}(x) + D e_{xx}(x,t),
\end{equation}
where we define the control-related term
\begin{equation}\label{eq:q_definition}
    q(x, t) := \left[\hat{\rho}(x,t)U(x,t)\right]_x .
\end{equation}

From the zero-flux boundary conditions \eqref{eq:no_flux_at_boundaries}, the error satisfies
\begin{equation}\label{eq:err_boundary_cod}
    \begin{cases}
    \big[\mp K e(x,t) + D e_x(x,t)\big]_{-a}^a = 0,\\
    \big[\hat{\rho}(x,t) U(x,t)\big]_{-a}^a = 0.
    \end{cases}
\end{equation}
We assume that the desired density satisfies compatible boundary conditions\footnote{For periodic domains, it suffices that $\rho^{\mathrm{d}}$ is periodic. For reflective boundaries, it must hold that $\mp K \rho^{\mathrm{d}} + D \rho^{\mathrm{d}}_x|_{-a}^a = 0$, which is satisfied, for instance, if $\rho^{\mathrm{d}}=\rho^{\mathrm{d}}_x=0$ at the boundary.}.  
Hence, the control objective reduces to designing $q$ such that $\lim_{t\to\infty}\|e(\cdot,t)\|_2 = 0$.

\begin{theorem}[Global exponential macroscopic convergence]\label{th:convergence_1d}
Choose
\begin{equation}\label{eq:q}
    q(x, t) = -k_p e(x, t) - k_s(t)\,\mathrm{sign}(e(x, t)) + \alpha(t),
\end{equation}
where $k_p>0$, $\alpha(t)$ is any bounded function of time, and $k_s(t)$ satisfies
\begin{equation}\label{eq:kst}
    k_s(t) > A + K\|e_x(\cdot, t)\|_\infty , \qquad \forall t\ge 0,
\end{equation}
with
\begin{equation}\label{eq:A}
    A := D \|\rho^{\mathrm{d}}_{xx}(\cdot)\|_\infty + K \|\rho^{\mathrm{d}}_x(\cdot)\|_\infty .
\end{equation}
Then, the solution of \eqref{eq:error_dynamics} satisfies
\begin{equation}
    \Vert e(\cdot, t)\Vert_2^2 \le \Vert e(\cdot,0)\Vert_2^2 \,\mathrm{exp}(-k_p t),
\end{equation}
i.e., the density tracking error converges globally and exponentially to zero in $\mathcal{L}^2(\Omega)$.
\end{theorem}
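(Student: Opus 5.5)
The plan is a direct Lyapunov argument in $\mathcal{L}^2(\Omega)$ on the error dynamics \eqref{eq:error_dynamics}, with the control term \eqref{eq:q} substituted in. I take as candidate
\[
V(t)=\tfrac12\|e(\cdot,t)\|_2^2=\tfrac12\int_{-a}^{a}e^2(x,t)\,\mathrm{d}x ,
\]
and aim to show $\dot V\le -k_p\|e\|_2^2=-2k_pV$. By Gr\"onwall this gives $\|e(\cdot,t)\|_2^2\le\|e(\cdot,0)\|_2^2\exp(-2k_pt)$, which implies the claimed bound with rate $k_p$.

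Differentiating $V$ and using \eqref{eq:error_dynamics} splits $\dot V=\int e\,e_t\,\mathrm{d}x$ into four groups, which I treat as follows.
\begin{itemize}
\item \emph{Control group.} We have $\int e\,q=-k_p\|e\|_2^2-k_s(t)\|e\|_1+\alpha(t)\int e$. The last term vanishes because $\int e=\int\rho^{\mathrm d}-\int\hat\rho=1-1=0$ by \eqref{eq:mass_conserved} and the normalization of $\rho^{\mathrm d}$. This is why $\alpha$ may be an arbitrary bounded function.
\item \emph{Reference group.} The terms $\pm K\int e\,\rho^{\mathrm d}_x-D\int e\,\rho^{\mathrm d}_{xx}$ are bounded by H\"older ($\mathcal{L}^1$--$\mathcal{L}^\infty$) by $A\|e\|_1$, with $A$ as in \eqref{eq:A} and finite by Assumption~\ref{ass:desired_density}.
\item \emph{Advection group.} I bound $\mp K\int e\,e_x$ crudely by $K\|e_x\|_\infty\|e\|_1$, which is the origin of the $K\|e_x\|_\infty$ term in \eqref{eq:kst}.
\item \emph{Diffusion group.} Integrating by parts gives $D\int e\,e_{xx}=D[e\,e_x]_{-a}^a-D\|e_x\|_2^2$.
\end{itemize}
Collecting terms, and assuming the boundary contribution can be discarded,
\[
\dot V\le -k_p\|e\|_2^2-\big(k_s(t)-A-K\|e_x\|_\infty\big)\|e\|_1-D\|e_x\|_2^2\le -k_p\|e\|_2^2 .
\]
The last inequality uses \eqref{eq:kst}.

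I expect the main obstacle to be the boundary term $D[e\,e_x]_{-a}^a$. In the periodic case it vanishes trivially. In the reflective case I would substitute the error boundary condition \eqref{eq:err_boundary_cod}, namely $De_x=\pm Ke$ at the endpoints, together with the compatibility footnote ($\rho^{\mathrm d}=\rho^{\mathrm d}_x=0$ on $\partial\Omega$). If a nonzero residual remains, the fallback is to integrate the advection term by parts instead, so that it produces the companion boundary term $\mp\tfrac K2[e^2]_{-a}^a$. The combined boundary term is then $[e(De_x\mp\tfrac K2 e)]_{-a}^a$, and I would show its sign is nonpositive using the flux condition.

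A secondary technical point is that $\mathrm{sign}(e)$ makes \eqref{eq:error_dynamics} discontinuous in the state, so the computation of $\dot V$ needs justification. Following \cite{cristofaro2019robust}, I would replace $\mathrm{sign}$ by a regularized saturation, carry out the same estimates for solutions in $H^2(\Omega)$ (where $\|e_x\|_\infty$ is finite by Sobolev embedding, so \eqref{eq:kst} is meaningful), and pass to the limit in the regularization parameter. This step is only sketched here.
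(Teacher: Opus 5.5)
Your skeleton matches the paper's: the same $V$, the H\"older bound $A\|e\|_1$ for the reference terms, and $\int_\Omega e\,\mathrm{d}x=0$ to remove $\alpha$. For periodic boundary conditions your argument closes. There the advection term is actually zero, so your crude bound on it is only wasteful.

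The gap is the reflective case, which the theorem also covers. There \eqref{eq:err_boundary_cod} holds pointwise, $De_x=\pm Ke$ at $x=\pm a$, so $D[e\,e_x]_{-a}^a=\pm K[e^2]_{-a}^a$. Moreover $e(\pm a,t)=\rho^{\mathrm d}(\pm a)-\hat\rho(\pm a,t)$ has no reason to vanish, since the compatibility footnote constrains $\rho^{\mathrm d}$, not $\hat\rho$. So this boundary contribution cannot be discarded. Your fallback does not fix it. After integrating the advection term by parts, the combined residual is $[e(De_x\mp\tfrac K2e)]_{-a}^a=\pm\tfrac K2\big(e^2(a,t)-e^2(-a,t)\big)$. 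This is sign-indefinite: for the upper system it is positive whenever $|e(a,t)|>|e(-a,t)|$. The flux condition $\hat\rho U=0$ says nothing about $e(\pm a,t)$. So the step ``show its sign is nonpositive using the flux condition'' would fail.

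The missing idea is to pay for this residual rather than sign it. The paper writes $\pm\tfrac K2[e^2]_{-a}^a=\pm K\int_\Omega e\,e_x\,\mathrm{d}x\le K\|e_x\|_\infty\|e\|_1$. That is exactly what the $K\|e_x\|_\infty$ term in \eqref{eq:kst} is for. This works only because the advection term is kept together with the diffusion boundary term, where they partially cancel: $\pm 2K\int_\Omega e\,e_x\,\mathrm{d}x\mp K\int_\Omega e\,e_x\,\mathrm{d}x=\pm K\int_\Omega e\,e_x\,\mathrm{d}x$.

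In your main line you have already spent the $K\|e_x\|_\infty$ budget on the bulk advection term. Bounding the leftover $D[e\,e_x]_{-a}^a=\pm 2K\int_\Omega e\,e_x\,\mathrm{d}x$ the same way would require $k_s>A+3K\|e_x\|_\infty$, which \eqref{eq:kst} does not provide. The fix is:
\begin{enumerate}
\item drop the separate advection estimate and integrate both the diffusion and advection terms by parts;
\item use the boundary condition to make $[e(De_x\mp Ke)]_{-a}^a$ vanish;
\item bound the remaining $\pm\tfrac K2[e^2]_{-a}^a$ by H\"older.
\end{enumerate}
Your rate $2k_p$, from $\dot V\le -2k_pV$, is correct and slightly sharper than the stated bound.
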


\begin{proof}
Let us introduce the candidate Lyapunov functional

\begin{equation}
    V(t)  = \frac{1}{2}\Vert e(\cdot, t)\Vert_2^2 = \frac{1}{2} \int_\Omega e^2(x, t) \, \mathrm{d}x.
    \label{eq:Lyap_function}
\end{equation}
Differentiating $V$ in time and using \eqref{eq:error_dynamics}, we get (dropping functional dependencies for simplicity)
\begin{multline}
    V_t = \int_\Omega e e_t \,\mathrm{d}x = \int_\Omega eq\, \,\mathrm{d}x 
    \pm K \int_\Omega  e \rho^{\mathrm{d}}_x  \,\mathrm{d}x \\- D \int_\Omega e \rho ^{\mathrm{d}}_{xx} \, \mathrm{d}x 
    + D \int_\Omega  e e_{xx}\, \mathrm{d}x \mp K \int_\Omega  e e_x \, \mathrm{d}x. 
    \label{eq:dot_v}
\end{multline}
To investigate the sign of $V_t$, we focus separately on its terms. First, we note that 
\begin{multline}\label{eq:bound_general_case}
    D \int_\Omega e  e_{xx} \, \mathrm{d}x \mp K\int_\Omega e  e_{x} \, \mathrm{d}x 
    = D\left[ee_x\right]_{-a}^{a} - D\int_\Omega e_x^2 \, \mathrm{d}x 
    \\\mp \frac{K}{2} \int_\Omega (e^2)_x \, \mathrm{d}x  \leq \left[e\left(D e_x \mp \frac{K}{2}e\right) \right]_{-a}^a,
\end{multline}
where we applied integration by parts on the first term (recalling that $\int_\Omega (e_x)^2 dx \leq 0$) and recalled that $(e^2)_x = 2ee_x$. By adding and subtracting $[\pm K/2 \,e^2]_{-a}^a$ in the rightmost term of \eqref{eq:bound_general_case}, we recover
\begin{multline}\label{eq:boud_gigante}
    D \int_\Omega e  e_{xx} \, \mathrm{d}x \mp K\int_\Omega e  e_{x} \, \mathrm{d}x 
    \leq  \left[e\left(D e_x \mp K e\right) \right]_{-a}^a \\\mp \frac{K}{2} \left[e^2\right]_{-a}^a 
    = \mp \frac{K}{2} \left[e^2\right]_{-a}^a, 
\end{multline}
where we applied boundary conditions -- see \eqref{eq:err_boundary_cod}. Furthermore, it holds
\begin{multline}\label{eq:bound_e^2}
    \mp \frac{K}{2} \left[e^2\right]_{-a}^a = \pm K \int_\Omega ee_x \, \mathrm{d}x   \leq K \int_\Omega \left\vert ee_x\right\vert \, \mathrm{d}x= K\Vert e e_x\Vert_1\\
    \leq K\Vert e \Vert_1 \Vert e_x \Vert_\infty = K\Vert e_x\Vert_\infty\int_\Omega \vert e\vert\,\mathrm{d}x,
\end{multline}
where we applied the definition of $\mathcal{L}^1(\Omega)$ norm and the Holders' inequality \cite{axler2020measure}. Hence, combining \eqref{eq:bound_general_case}, \eqref{eq:boud_gigante} and \eqref{eq:bound_e^2},  it holds that
\begin{equation}\label{eq:final_bound_multiple terms}
    D \int_\Omega e  e_{xx} \, \mathrm{d}x \mp K\int_\Omega e  e_{x} \, \mathrm{d}x
    \leq K\Vert e_x\Vert_\infty\int_\Omega \vert e\vert\,\mathrm{d}x.
\end{equation}

To study the remaining terms in \eqref{eq:dot_v} we recall that, for any function $h(x)\in\mathcal{L}^\infty(\Omega)$, we can write 
\begin{equation}
    \pm \int_\Omega  e h \, \mathrm{d}x \leq \left|\int_\Omega  e h \, \mathrm{d}x \right|
    \leq \int_\Omega  \left|e h \right| \, \mathrm{d}x = \Vert e h(\cdot)\Vert_1. 
\end{equation}
Additionally, applying Holder's inequality, it holds
\begin{equation}
    \pm \int_\Omega  e h \, \mathrm{d}x \leq \Vert e h\Vert_1
    \leq \Vert e\Vert_1\Vert h\Vert_\infty = \Vert h\Vert_\infty \int_\Omega\vert e\vert\mathrm{d}x .
    \label{eq:bound_holder}
\end{equation}
Given the bound in equation \eqref{eq:bound_holder}, and being  $K$ and $D$ positive constants, yields 
\begin{multline}
    \pm K \int_\Omega  e \rho^\mathrm{d}_x \, \mathrm{d}x - D \int_\Omega e \rho^\mathrm{d}_{xx} \, \mathrm{d}x  \leq K \Vert \rho^\mathrm{d}_x\Vert_\infty \int_\Omega  \vert e\vert\,\mathrm{d}x \\+ D \Vert \rho ^\mathrm{d}_{xx}\Vert_\infty\int_\Omega \vert e\vert\,\mathrm{d}x  = A \int_\Omega \vert e\vert\,\mathrm{d}x ,
    \label{eq:positive_terms}
\end{multline}
where $A$ (see \eqref{eq:A}) is a positive constant that depends on the reference density, the maximum amplitude of the unknown internal dynamics and the diffusion coefficient.

Given the bounds in \eqref{eq:final_bound_multiple terms} and \eqref{eq:positive_terms}, and applying them to \eqref{eq:dot_v},  yields   
\begin{align}\label{eq:Vt}
     V_t  \leq \int_\Omega eq\, \mathrm{d}x 
     + \left(A + K\Vert e_x \Vert_\infty\right) \int_\Omega \vert e\vert \,\mathrm{d}x.
\end{align}
We fix $q$ as in \eqref{eq:q} and, substituting it into \eqref{eq:Vt}, we get
\begin{multline}
    V_t \leq -k_p\int_\Omega e^2 \,\mathrm{d}x + (A + K\Vert e_x \Vert_\infty - k_s) \int_\Omega \vert e\vert \,\mathrm{d}x  + \alpha\int_\Omega e \,\mathrm{d}x \\= -k_p\int_\Omega e^2 \,\mathrm{d}x + (A + K\Vert e_x \Vert_\infty - k_s) \int_\Omega \vert e\vert \,\mathrm{d}x,
\end{multline}
where we used that $\int_\Omega e\,\mathrm{d}x = 0$ due to boundary conditions ensuring mass conservation. Then, choosing the time-varying gain $k_s$ as in \eqref{eq:kst}, implies $V_t < -k_p V$, proving the claimed exponential convergence. The rate of convergence is upper bounded by $k_p$.
\end{proof}

\begin{remark}\label{rem:wellposedness}
    The control action in~\eqref{eq:q} is discontinuous due to the $\mathrm{sign}(\cdot)$ term and solutions of the closed-loop dynamics should  therefore be interpreted in the Filippov sense \cite[Chapter 2]{filippov88}. Well-posedness is addressed in Sec.~\ref{sec:well_posedness} by introducing a smooth $\epsilon$-parameterized regularization of~\eqref{eq:q} following the approach of~\cite{cristofaro2019robust,Orlov2000} for infinite-dimensional switching control of parabolic PDEs. 
\end{remark}

\begin{remark}[Recovery of the velocity field]\label{rem:alpha}
Given the definition \eqref{eq:q_definition}, the macroscopic velocity field $U$ can be recovered from $q$ by spatial integration. 
The additive function $\alpha(t)$ does not affect stability but must be selected to ensure that the resulting flux satisfies boundary conditions \eqref{eq:no_flux_at_boundaries}. 
Details are provided in Section~\ref{sec:choiceofalpha}.
\end{remark}

\begin{remark}[Periodic boundary conditions]
Under periodic boundary conditions, the Lyapunov analysis in Theorem \ref{th:convergence_1d} simplifies as the boundary terms in \eqref{eq:bound_general_case} vanish, making 
\begin{align}
    D \int_\Omega e  e_{xx} \, \mathrm{d}x \mp K\int_\Omega e  e_{x} \, \mathrm{d}x \leq 0.
\end{align} 
In this case, it suffices to select a time invariant gain $k_s > A$ to guarantee global stability.
\end{remark}

\subsection{Closed-loop analysis and extensions}
We now discuss structural properties of the closed-loop system 
induced by the control law~\eqref{eq:q}, establishing well-posedness 
via regularization, and presenting extensions to tracking problems 
with time-varying desired densities.
{

\subsubsection{Well-posedness of the closed loop dynamics}\label{sec:well_posedness}

We study the well-posedness of the closed-loop solution of the error dynamics in \eqref{eq:error_dynamics} under the control action given by \eqref{eq:q}, adopting the same approach as in \cite[Section 4]{cristofaro2019robust}. Specifically, for any $\epsilon > 0$, let $\mathrm{sign}_\epsilon : \mathbb{R} \to [-1,1]$ be a smooth monotone approximation of the sign function (e.g., $\mathrm{sign}_\epsilon(s) = \tanh(s/\epsilon)$) such that $\|L_\epsilon\|_2 \to 0$ as $\epsilon \to 0$ with $L_\epsilon:=\mathrm{sign}_\epsilon(\cdot) - \mathrm{sign}(\cdot)$. Let us construct the regularized control input
\begin{align}\label{eq:regularize_control}
    q_\epsilon(x, t) = -k_p\, e(x, t) - k_s(t)\, \mathrm{sign}_{\epsilon}[e(x, t)] + \alpha(t).
\end{align}
Following \cite[Definition~1]{cristofaro2019robust}, the 
closed-loop system~\eqref{eq:error_dynamics} 
with~\eqref{eq:q} admits a generalized solution if the 
regularized problem possesses a family of strong solutions 
converging in $\mathcal{L}^2(\Omega)$ to a limit as 
$\epsilon \to 0$.

In the following Theorem we study the stability properties of \eqref{eq:error_dynamics} under the regularized control action \eqref{eq:regularize_control}. 

\begin{theorem}[Stability under a regularized control action]
    Let $m_\epsilon(t)= \sqrt{2} k_s(t) \lVert L_\epsilon \rVert_2$ with $k_s$ chosen as in \eqref{eq:kst}.  Under the regularized control action \eqref{eq:regularize_control}, the solution of the error dynamics, say $e_\epsilon(t)$, satisfies 
    \begin{equation}
        \lim_{t\to\infty}\lVert e_\epsilon(\cdot, t) \lVert_2 \leq \frac{1}{k_p}\lim_{t\to\infty}m_\epsilon(t).
    \end{equation}
\end{theorem}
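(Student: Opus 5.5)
We reuse the Lyapunov functional \eqref{eq:Lyap_function}, now evaluated along the regularized error $e_\epsilon$, and repeat the estimates from the proof of Theorem~\ref{th:convergence_1d} with one extra term. Write
\begin{equation*}
\mathrm{sign}_\epsilon(e_\epsilon)=\mathrm{sign}(e_\epsilon)+L_\epsilon(e_\epsilon),
\end{equation*}
so that $q_\epsilon$ equals the discontinuous law \eqref{eq:q} plus the perturbation $-k_s(t)L_\epsilon(e_\epsilon)$. For the regularized problem we take strong solutions, which are assumed to exist by the smoothness of $q_\epsilon$, as in \cite{cristofaro2019robust}. The integrations by parts in \eqref{eq:bound_general_case}--\eqref{eq:final_bound_multiple terms} then go through verbatim. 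The Hölder bounds \eqref{eq:positive_terms} are unchanged, and so is the cancellation of the $\alpha(t)$ term by mass conservation.

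Reproducing \eqref{eq:Vt} with $q_\epsilon$ gives
\begin{equation*}
V_t \le -k_p\|e_\epsilon\|_2^2+\big(A+K\|e_{\epsilon,x}\|_\infty-k_s\big)\|e_\epsilon\|_1-k_s\int_\Omega e_\epsilon L_\epsilon(e_\epsilon)\,\mathrm{d}x .
\end{equation*}
The middle term is nonpositive by \eqref{eq:kst}. For the last term we apply Cauchy--Schwarz, $\big|\int_\Omega e_\epsilon L_\epsilon\,\mathrm{d}x\big|\le \|e_\epsilon\|_2\|L_\epsilon\|_2$, interpreting $\|L_\epsilon\|_2$ as the $\mathcal{L}^2$ norm of $L_\epsilon$ composed with the error profile, bounded uniformly as in the statement. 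With $\|e_\epsilon\|_2=\sqrt{2V}$ this yields
\begin{equation*}
V_t\le -2k_pV+\sqrt{2}\,k_s\|L_\epsilon\|_2\sqrt{V}=-2k_pV+m_\epsilon(t)\sqrt{V}.
\end{equation*}

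Next we set $W=\sqrt{V}=\|e_\epsilon\|_2/\sqrt2$. Where $W>0$ we have $W_t=V_t/(2W)$, hence
\begin{equation*}
W_t\le -k_pW+\tfrac12 m_\epsilon(t).
\end{equation*}
The comparison lemma and variation of constants then give
\begin{equation*}
W(t)\le e^{-k_pt}W(0)+\tfrac12\int_0^t e^{-k_p(t-s)}m_\epsilon(s)\,\mathrm{d}s .
\end{equation*}
Taking $t\to\infty$, assuming $m_\epsilon$ has a limit (as implicitly required by the statement), the convolution converges to $m_\epsilon(\infty)/k_p$. So $\lim\|e_\epsilon\|_2=\sqrt2\lim W\le \frac{\sqrt2}{2k_p}\lim m_\epsilon\le \frac1{k_p}\lim m_\epsilon$, which is the claim with some slack in the constant.

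The main obstacle is the step $W_t=V_t/(2W)$, which fails where $V=0$. To handle it we would either work with $\sqrt{V+\delta}$ and let $\delta\to0$, or use the upper Dini derivative. A second delicate point is that $k_s(t)$ depends on $\|e_{\epsilon,x}\|_\infty$, so $m_\epsilon$ is solution-dependent. Its limit must therefore be taken as given, or bounded using the $H^2$ regularity of strong solutions so that $\|e_{\epsilon,x}\|_\infty$ stays bounded.
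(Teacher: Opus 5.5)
Your proposal is correct and follows the paper's proof essentially step for step. You use the same functional $V=\tfrac12\|e\|_2^2$, the same splitting of $\mathrm{sign}_\epsilon$ into $\mathrm{sign}$ plus the mismatch $L_\epsilon$, the same H\"older/Cauchy--Schwarz bound on $k_s\int_\Omega e\,L_\epsilon(e)\,\mathrm{d}x$, and the same comparison of $\sqrt{V}$ with a stable scalar linear ODE. The differences are in your favour: you keep $-2k_pV$ (the paper weakens it to $-k_pV$) and convert $\sqrt V$ back to $\|e_\epsilon\|_2=\sqrt{2V}$ explicitly, which is what actually yields the stated constant $1/k_p$, since the paper's chain read literally bounds $\lim\sqrt V$ by $\tfrac{1}{k_p}\lim m_\epsilon$ and hence $\lim\|e_\epsilon\|_2$ only by $\tfrac{\sqrt2}{k_p}\lim m_\epsilon$; your remarks on the point $V=0$, on reading $\|L_\epsilon\|_2$ as the norm of $L_\epsilon$ composed with the error, and on the solution dependence of $m_\epsilon$ also make explicit assumptions the paper leaves tacit.
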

\begin{proof}
Let us rewrite 
\begin{align}\label{eq:F_decomposition}
    \mathrm{sign}_\epsilon[e(x, t)] = \mathrm{sign}[e(x, t)] - L_\epsilon[e(x, t)],
\end{align}
where $L_\epsilon$ denotes the mismatch between the sign and its regularized counterpart. 

Substituting~\eqref{eq:regularize_control} into~\eqref{eq:Vt} yields
\begin{multline}\label{eq:Vt_smooth}
    V_t \leq -k_p\, V + \big(A + K\lVert e_x \rVert_\infty - k_s\big) \int_\Omega \lvert e \rvert\,\mathrm{d}x \\
    + k_s \int_\Omega e\, L_\epsilon(e)\,\mathrm{d}x.
\end{multline}
Bounding the last term using H$\ddot{\mathrm{o}}$lder's inequality and recalling that $\big(A + K\lVert e_x \rVert_\infty - k_s\big) \int_\Omega \lvert e \rvert\,\mathrm{d}x\leq0$ under our choice of $k_s$, we write
\begin{align}\label{eq:Vt_smooth_bound}
    V_t \leq -k_p\, V + \sqrt{2} k_s \lVert L_\epsilon \rVert_2 \sqrt{V} 
\end{align}
By defining $z = \sqrt{V}$, we can study the dynamics of the bounding system of \eqref{eq:Vt_smooth_bound}, that is 
\begin{align}\label{eq:ks_smooth}
    z_t(t) = \frac{1}{2}\left[ -k_p z(t) + m_\epsilon(t)\right].
\end{align}
Being $k_p>0$ by construction, it holds
\begin{align}
    \lim_{t\to\infty}z(t) = \frac{1}{k_p}\lim_{t\to\infty}m_\epsilon(t).
\end{align}
\end{proof}
\begin{remark}
    Since $\|L_\epsilon\|_2 \to 0$ as $\epsilon \to 0$, it holds that
    \begin{align}
        \lim_{\epsilon \to 0} m_\epsilon(t) = 0.
    \end{align}
    This ensures the existence of a generalized solution of~\eqref{eq:error_dynamics} under~\eqref{eq:q} in the sense of~\cite[Definition~1]{cristofaro2019robust}. Moreover, the regularized control action~\eqref{eq:regularize_control} suppresses chattering phenomena arising from the discontinuity of the $\mathrm{sign}(\cdot)$ term.
\end{remark}

\subsubsection{Extension to tracking problems} \label{sec:Tracking_problem}
The proposed design readily extends to tracking problems with time-varying desired densities $\rho^{\mathrm{d}}(x,t)$, provided that the total mass is conserved, i.e.,
$\left(\int_\Omega \rho^{\mathrm{d}}(x,t)\,dx\right)_t = 0$. 
In this case, adding a feedforward term $-\rho^{\mathrm{d}}_t$ to $q$ in \eqref{eq:q} preserves the stability properties of the closed-loop system.

}

\subsection{\texorpdfstring{Choice of $\alpha$}{Choice of alpha}}\label{sec:choiceofalpha}
\begin{figure*}[t]
    \centering
    \includegraphics[width=1\linewidth]{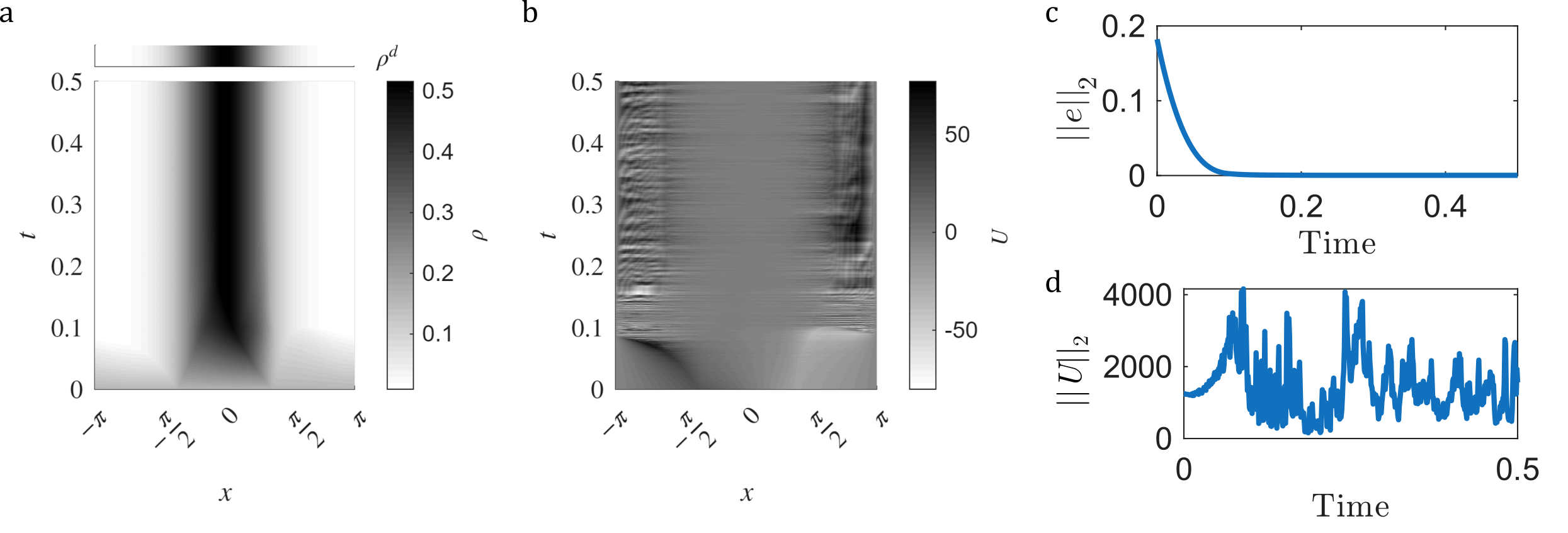}
    \caption{\textbf{Validation of the controller in a 1D macroscopic setting}. \textbf{a}. Evolution in time and space (bottom panel) of the density described by Equation \eqref{eq:upper_bounding}. The upper panel shows the desired density using the same colormap used for the density portrayed in the bottom panel. 
    \textbf{b}. Evolution in time and space of the velocity field $U(x,t)$ generated by the controller.
    \textbf{c}. $\mathcal{L}^2$ norm of the control error over time.
    \textbf{d}. $\mathcal{L}^2$ norm of the velocity field induced by the control action in time.
    }
    \label{fig:1D_Validation_macroscopic}
\end{figure*}

The additive term $\alpha(t)$ in \eqref{eq:q} does not
affect stability, but must be selected so that the resulting velocity field satisfies the zero-flux boundary conditions \eqref{eq:no_flux_at_boundaries}. 
Since $q = (\hat{\rho} U)_x$, the control flux can be recovered by spatial integration as
\begin{align}\label{eq:U_from_q}
    \hat{\rho}(x, t) U(x, t)
    &= -k_p \int e(x, t)\,\mathrm{d}x\\
    &- k_s(t) \int \!\big[\mathrm{sign}(e(x, t))+\alpha(t)\big] \mathrm{d}x
    + B(t),
\end{align}
where $B(t)$ is an arbitrary function of time coming from the integration in space.

We now specify $\alpha(t)$ for the two classes of boundary conditions considered in this paper.

\paragraph{Periodic boundary conditions}
Under periodic boundary conditions, mass conservation implies $\int_\Omega e(x,t)\,\mathrm{d}x = 0$. 
Moreover, the following classical result holds \cite[Ch.~1, Th.~1.6]{katznelson2004introduction}.

\begin{lemma}\label{lemma:periodicity}
If $f:\Omega\to\mathbb{R}$ is periodic and $\int_\Omega f(x)\,\mathrm{d}x = 0$, then its indefinite integral is also periodic.
\end{lemma}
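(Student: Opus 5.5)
We work on the periodic domain $\Omega=[-a,a]$, with period $2a$, and show that $F(x):=\int_{-a}^{x} f(s)\,\mathrm{d}s$ satisfies $F(x+2a)=F(x)$ for all $x$. Any other indefinite integral of $f$ differs from $F$ by an additive constant, so it suffices to prove periodicity of $F$. We regard $f$ as extended $2a$-periodically to all of $\mathbb{R}$ and assume it is locally integrable, which holds for the functions used later, since $e(\cdot,t)\in\mathcal{L}^2(\Omega)\subset\mathcal{L}^1(\Omega)$ and $\mathrm{sign}(e)$ is bounded. Then $F$ is well defined and absolutely continuous on $\mathbb{R}$.

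The proof is a direct computation with additivity of the integral:
\begin{equation*}
F(x+2a)-F(x)=\int_{x}^{x+2a} f(s)\,\mathrm{d}s .
\end{equation*}
Next we show that the integral of a $2a$-periodic function over any interval of length $2a$ does not depend on where the interval starts. Define $G(x):=\int_x^{x+2a}f(s)\,\mathrm{d}s$. Splitting at a multiple of the period and using the substitution $s\mapsto s-2a$ on the overhanging piece gives $G(x)=\int_{-a}^{a}f(s)\,\mathrm{d}s$. For continuous $f$ one can argue more quickly: $G'(x)=f(x+2a)-f(x)=0$, so $G$ is constant. This translation-invariance step is the only substantive point, and in the merely integrable case it must be done with the change of variables rather than differentiation.

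Combining the two steps with the hypothesis gives
\begin{equation*}
F(x+2a)-F(x)=\int_\Omega f(s)\,\mathrm{d}s=0 ,
\end{equation*}
so $F$ is $2a$-periodic. In particular $F(a)=F(-a)$, which is exactly the boundary compatibility needed when applying this to $f=q$ in the choice of $\alpha$. There, choosing $\alpha(t)$ so that $\int_\Omega q\,\mathrm{d}x=0$ makes the recovered flux $\hat\rho U$ periodic. By mass conservation, $\int_\Omega e\,\mathrm{d}x=0$, so this condition reduces to $\alpha(t)=\frac{k_s(t)}{2a}\int_\Omega \mathrm{sign}(e)\,\mathrm{d}x$.
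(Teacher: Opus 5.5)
Your proof is correct and is essentially the standard argument behind the classical result that the paper cites without proof (Katznelson, Ch.~1, Th.~1.6): $F(x+2a)-F(x)=\int_x^{x+2a}f(s)\,\mathrm{d}s$, which by translation invariance of the integral of a periodic function over a full period equals $\int_\Omega f\,\mathrm{d}x=0$. Your closing derivation of $\alpha(t)$ from $\int_\Omega q\,\mathrm{d}x=0$ also matches the paper's periodic-case choice; as a minor aside, the differentiation shortcut does extend to merely integrable $f$, since $G$ is absolutely continuous with $G'=0$ almost everywhere.
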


Therefore, the first term on the right-hand side of \eqref{eq:U_from_q} is periodic. 
To make the second term periodic as well, it suffices to impose
\begin{equation}\label{eq:alpha_periodic}
    \alpha(t) = \frac{k_s(t)}{2a}\int_\Omega \mathrm{sign}(e(x,t))\,\mathrm{d}x ,
\end{equation}
which guarantees $\int_\Omega [k_s \mathrm{sign}(e)+\alpha]\,dx = 0$. 
Since $\mathrm{sign}(\cdot)$ is bounded, $\alpha(t)$ is bounded for all $t$. 
The term $B(t)$ does not affect periodicity and can be set arbitrarily.

\paragraph{Reflective boundary conditions}
For reflective boundaries, \eqref{eq:no_flux_at_boundaries} requires that the flux vanishes at $x=\pm a$. 
Rewriting \eqref{eq:U_from_q} as
\begin{align}\label{eq:ufromq_reflective}
    \hat{\rho}(x, t) U(x, t)
    = G(x,t) + \alpha(t)x + B(t),
\end{align}
with
\begin{equation}
    G(x,t) = -k_p \int e(x,t)\,\mathrm{d}x - k_s(t)\int \mathrm{sign}(e(x,t))\,\mathrm{d}x ,
\end{equation}
the unique choice of $\alpha(t)$ and $B(t)$ ensuring zero flux at $x=\pm a$ is
\begin{align}
    \alpha(t) &= \frac{G(-a,t)-G(a,t)}{2a},\\
    B(t) &= -\frac{G(-a,t)+G(a,t)}{2}.
\end{align}

\subsection{Discretization}
We have derived a macroscopic control input $q$ that guarantees convergence of the density dynamics of the bounding systems $\underline{\rho}$ and $\bar{\rho}$ to the desired profile $\rho^\mathrm{d}$. By the comparison argument in Section~\ref{sec:bounding_sys}, this implies convergence of the nominal density $\rho$ associated with the original stochastic multi-agent system~\eqref{eq:micro_system}. Therefore, the macroscopic control law is suitable to regulate the collective behavior of the actual agent population.

From \eqref{eq:U_from_q}, the macroscopic velocity field $U(x,t)$ can be recovered from $q$ by spatial integration. To obtain microscopic control inputs for each agent, we adopt a spatial sampling strategy as in~\cite{maffettone2022continuification}, and apply the velocity field at the current agent locations, namely
\begin{align}\label{eq:spatial_sampling}
    u_i(t) = U(x_i(t), t), \quad i \in \{1,\dots,N\}.
\end{align}
This yields a distributed feedback law where each agent requires only the evaluation of the macroscopic velocity field at its own position.

When implementing the control law on the microscopic dynamics~\eqref{eq:micro_system}, the velocity field $U$ must be computed using the actual density $\rho$, rather than the bounding densities $\hat{\rho}$. In practice, $\rho$ is not available in closed form and must be estimated from the agent positions. Standard techniques such as kernel density estimation or histogram-based approximations can be employed to reconstruct $\rho(x,t)$ in real time from $\{x_i(t)\}_{i=1}^N$.

We remark that the convergence analysis is performed at the macroscopic level, while the microscopic implementation introduces approximation errors due to finite population effects and density estimation. 
{Nevertheless, for sufficiently large $N$ and accurate density reconstruction, the closed-loop multi-agent system behaves sufficiently in accordance with its continuum counterpart, as confirmed by the numerical simulations in Section~\ref{sec:1D_Validation}}.

\begin{figure*}[!t]
    \centering
    \includegraphics[width=1\linewidth]{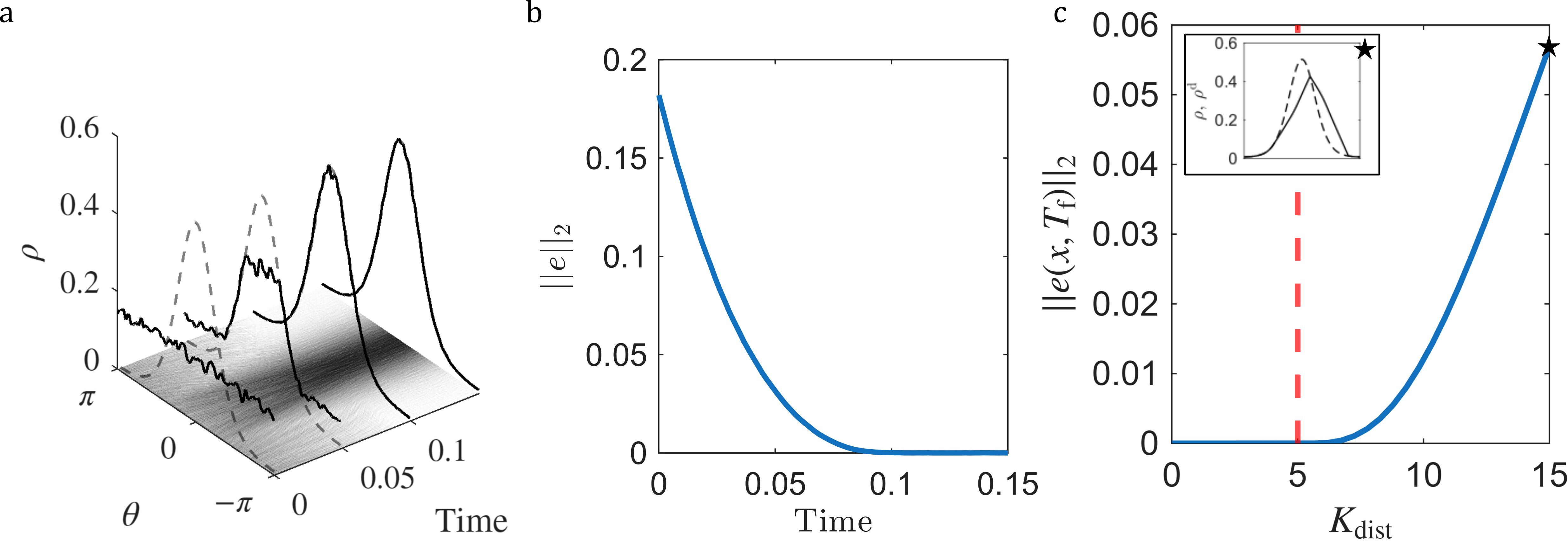}
    \caption{\textbf{Control of an ensemble of heterogeneous stochastic oscillators}. \textbf{a}. Evolution in time an space of all the agents in the ensemble. On the z-axis the estimated (solid) and desired (dashed) densities are displayed in four representative time instants. 
    \textbf{b}. Evolution of $\Vert e\Vert_2$ in time.
    \textbf{c}. Residual steady state error $e(x,T)$ for increasing values of the oscillators heterogeneity $K_{dist}$. The red dashed vertical line represents the level of heterogeneity used to determine the control gains. $T_f$ is the final time instant of the simulation, set as in panels a and b as 0.15 time units. The inset shows the final density density (solid black line) against the reference density (dashed black line).
    }
    \label{fig:1D_Validation_microscopic}
\end{figure*}

\section{Numerical validation} \label{sec:1D_Validation}
We validate our control strategy both at the macroscopic continuum ($N\to\infty$) and at the microscopic level for finite agent populations.
{All numerical simulations were implemented in MATLAB. All the necessary details to reproduce simulations are provided when needed.}

\subsection{Macroscopic validation}
We first consider the continuum setting described by the bounding system \eqref{eq:upper_bounding} in a periodic domain $\Omega=[-\pi,\pi]$. We set $D = 0.1$, $K=5$, $k_p=1$, and $k_s=1.1(D \|\rho^\mathrm{d}_{xx}\|_\infty + K \|\rho^\mathrm{d}_x\|_\infty)$ so as to satisfy condition \eqref{eq:kst}. We set the desired density as a Von-Mieses distribution 
\begin{equation} \label{eq:Von_Mieses_Distribution}
    \rho^\mathrm{d}(x)=\mathrm{exp}
    \,(\kappa  \cos(x-\mu)),
\end{equation}
where, $\kappa = 2$ and $\mu = 0$ represent the concentration and mean.
The PDE in \eqref{eq:upper_bounding} is numerically integrated using a Lax-Friedrichs finite volume scheme \cite{leveque2002finite} with a spatial grid of 200 points and a time step of $\Delta t = 10^{-4}$.
%

As expected, {choosing the macroscopic control input $q$ as in \eqref{eq:q} (and consequently $U$ from its spatial integration)}, the density converges to $\rho^\mathrm{d}$, while the velocity field exhibits rapid oscillations as the error approaches zero to compensate for diffusion and unmodeled dynamics (see Figs.~\ref{fig:1D_Validation_macroscopic}a,~\ref{fig:1D_Validation_macroscopic}b).
Additionally, we quantified the controller performance by analyzing the $\mathcal{L}^2(\Omega)$ norms of the control error and of the induced velocity field.
The error norm decreases monotonically, approaching zero in approximately 0.1 time units (see Fig.~\ref{fig:1D_Validation_macroscopic}c). In contrast, as shown in Fig.~\ref{fig:1D_Validation_macroscopic}d, the velocity field norm—a proxy for control effort—does not vanish even when the error converges to zero. This persistent control activity is due to the switching action required to stabilize the solution in the presence of unmodeled dynamics. 

\subsection{Microscopic validation}
We validate the control strategy at the microscopic level by considering a system of one-dimensional stochastic oscillators with 
heterogeneous natural frequencies, whose dynamics are described in a periodic domain by
\begin{equation} \label{eq:1D_Oscillator}
    \mathrm{d}\theta_i = (\omega_i+u_i) \mathrm{d}t + \sqrt{2D}\mathrm{d}W_i, \quad i\in\{1,\dots, N\}.
\end{equation}
Here, $\omega_i$ is the natural frequency of the $i$-th oscillator, and $D$ is the variance of the noise affecting the oscillator dynamics. 
Note that by defining $x_i=\theta_i$ and $g_i(x_i(t))=\omega_i$, system \eqref{eq:1D_Oscillator} takes the same form as \eqref{eq:micro_system}.
Furthermore, choosing $K = \max_i (\omega_i)$ uniformly bounds the dynamics of all oscillators in the system.

In our numerical experiments, we simulate $N=5000$ agents with the desired density given by the von Mises distribution in \eqref{eq:Von_Mieses_Distribution} with $\kappa=2$ and $\mu=0$. We set $D=0.1$ and draw each $\omega_i$ independently from a uniform distribution $\mathcal{U}(-5,5)$. 
%

Consequently, we set $K=5$ to define the upper and lower bounds for the system dynamics.
The control gains are chosen as $k_p=1$ and $k_s=1.1(D \Vert\rho ^\mathrm{d}_{xx}\Vert_\infty + K \Vert \rho^\mathrm{d}_x\Vert_\infty)$, ensuring that the hypotheses of Theorem~\ref{th:convergence_1d} hold. 
Given the finite population size, { for the computation of the macroscopic control action $q$ in \eqref{eq:q},} we estimate the density by normalizing and filtering (with a Gaussian kernel) the histogram of agent phases on the spatial grid. {Next, the macroscopic control field $U(x,t)$ is recovered from $q$ by spatial integration as described 
in Section~\ref{sec:choiceofalpha}. The control inputs to the oscillators in \eqref{eq:1D_Oscillator} are then obtained by spatially sampling the 
macroscopic velocity field $U(x,t)$ as in \eqref{eq:spatial_sampling}.}
We integrate \eqref{eq:1D_Oscillator} using the Euler-Maruyama scheme with time step $\Delta t=5\cdot10^{-5}$. 

The agent trajectories in Fig.~\ref{fig:1D_Validation_microscopic}a show that the oscillator phases converge to the prescribed desired density $\rho^\mathrm{d}$ after a short transient. 
As in the macroscopic simulation, we assess controller performance by analyzing the $\mathcal{L}^2(\Omega)$ norm of the control error, which decreases monotonically and approaches zero in approximately 0.1 time units (see Fig.~\ref{fig:1D_Validation_microscopic}b).

We also investigate the conservativeness of Theorem~\ref{th:convergence_1d} by running numerical experiments with incorrect estimates of $K$. 
Specifically, we fix $K=5$ for the control design while extracting the natural frequencies from uniform distributions in the interval $[-K_\mathrm{dist},K_\mathrm{dist}]$ (i.e., $\omega_i\sim\mathcal{U}(-K_\mathrm{dist},K_\mathrm{dist})$). We perform a parameter sweep over $K_\mathrm{dist} \in [0,15]$, keeping all other parameters identical to those used in Figs.~\ref{fig:1D_Validation_microscopic}a,~b.
For each experiment, we record the residual error norm $\Vert e(\cdot,T_\mathrm{f})\Vert_2$ at the final simulation time $T_\mathrm{f}$. As shown in Fig.~\ref{fig:1D_Validation_microscopic}c, when $K_\mathrm{dist}\leq K$ (vertical red dashed line), the controller guarantees zero residual error. In contrast, when $K_\mathrm{dist} > K$, a residual steady-state error emerges and increases with the degree of agent heterogeneity. The inset of Fig.~\ref{fig:1D_Validation_microscopic}c shows the final density profile for $K_\mathrm{dist} = 15$, demonstrating qualitative agreement between $\rho$ and $\rho^\mathrm{d}$ despite the model mismatch.

\section{Higher dimensional framework} \label{sec:2D_derivation_and_validation}
In this section, we expand our framework to more general higher-dimensional domains.

\subsection{Mathematical modeling}
In higher-dimensional domains, the dynamics of each agent are governed by
\begin{multline}\label{eq:micro_hd}
    \mathrm{d}\mathbf{x}_i(t) = \left[\mathbf{u}_i(t) + \mathbf{g}_i(t, \mathbf{X}(t))\right] \mathrm{d}t 
    + \sqrt{2D}\, \mathrm{d}\mathbf{W}_i(t), 
\end{multline}
for $i\in\{1,\dots,N\}$, where $\mathbf{x}_i\in\Omega\subset\mathbb{R}^n$ with $n=2,3$ (for instance, if $n=3$, then $\mathbf{x}_i = [x_{i,1}, x_{i,2}, x_{i,3}]^\top$), $\mathbf{u}_i$ is an $n$-dimensional control input, $\mathbf{X}\in\Omega^{Nn}$ collects the positions of all agents at time~$t$, $\mathbf{g}_i$ captures the uncompensated internal dynamics of agent~$i$ and potential environmental disturbances, $D\geq0$ is the diffusion coefficient, and $\mathbf{W}_i$ is an $n$-dimensional standard Wiener process. We adopt the same problem statement introduced in Section~\ref{sec:problem_statement} and, analogously to the one-dimensional setting, require that Assumptions~\ref{ass:omega_boundary_conditions}, \ref{ass:g_bounded}, and~\ref{ass:desired_density} hold. Without loss of generality, we take $\Omega = [-a, a]^n$ with $a>0$. In this higher-dimensional setting, Assumption~\ref{ass:g_bounded} becomes
\begin{align}
    -K_j < g_{i,j}(t, \mathbf{X}(t)) < K_j, \quad \forall\, t\in\mathbb{R}_{\geq 0},
\end{align}
where $g_{i,j}$ denotes the $j$-th component of $\mathbf{g}_i$ and $K_j$ is a positive constant. Similarly, Assumption~\ref{ass:desired_density} requires that $\rho^\mathrm{d}_{x_i},\, \rho^\mathrm{d}_{x_i x_i}\in\mathcal{L}^{\infty}(\Omega)$, where $x_i$ denotes the $i$-th component of the spatial coordinate~$\mathbf{x}$.
\subsection{Control design}

\subsubsection{Bounding systems}
As in the one-dimensional setting, we introduce bounding systems for~\eqref{eq:micro_hd}:
\begin{subequations}\label{eq:bounding_sys}
    \begin{align}
        \mathrm{d}\bar{\mathbf{x}}_i(t) &= \left[\mathbf{u}_i(t) + \mathbf{k}\right] \mathrm{d}t + \sqrt{2D}\, \mathrm{d}\mathbf{W}_i(t), \label{eq:bounding_upper}\\
        \mathrm{d}\underbar{${\mathbf{x}}$}_i(t) &= \left[\mathbf{u}_i(t) - \mathbf{k}\right] \mathrm{d}t + \sqrt{2D}\, \mathrm{d}\mathbf{W}_i(t), \label{eq:bounding_lower}
    \end{align}
\end{subequations}
for $i\in\{1,\dots,N\}$, where $\mathbf{k}$ is the vector of positive constants bounding $\mathbf{g}_i$. For instance, if $n=3$, then $\mathbf{k}=[K_1,K_2,K_3]$. By the same comparison argument used in the one-dimensional setting, any control input $\mathbf{u}_i$ that solves the control problem for the bounding systems also solves the nominal problem.

\subsubsection{Macroscopic model}
As in the one-dimensional setting, we consider a continuum mean-field approximation for the bounding systems~\eqref{eq:bounding_sys}, which we compactly write as 
\begin{align}\label{eq:continuum_model_hd}
    \hat{\rho}_t(\mathbf{x}, t) + \nabla \cdot \left[\hat{\rho}(\mathbf{x}, t) \left(\mathbf{U}(\mathbf{x}, t) + \hat{\mathbf{k}}\right)\right] = D\nabla^2\hat{\rho}(\mathbf{x}, t),
\end{align}
where $\hat{k}_i = \pm K_i$ (our analysis considers all possible sign combinations of $K_i$). The operators $\nabla\cdot(\cdot)$ and $\nabla^2(\cdot)$ denote the divergence and Laplacian, respectively. To comply with Assumption~\ref{ass:omega_boundary_conditions} and ensure zero net mass flux from $\Omega$, we impose boundary conditions of the form
\begin{align}\label{eq:no_flux_bc_hd}
    \begin{cases}
        \displaystyle\int_{\partial\Omega} \left[D\nabla\hat{\rho}(\mathbf{x},t) - \hat{\rho}(\mathbf{x},t)\mathbf{k}\right] \cdot \hat{\mathbf{n}}\,\mathrm{d}\mathbf{x}=0,\\[6pt]
        \displaystyle\int_{\partial\Omega} \left[\hat{\rho}(\mathbf{x},t)\mathbf{U}(\mathbf{x},t)\right] \cdot \hat{\mathbf{n}}\,\mathrm{d}\mathbf{x}=0,
    \end{cases}
\end{align}
where $\hat{\mathbf{n}}$ is the outward unit normal to $\partial\Omega$ and $\nabla(\cdot)$ denotes the gradient operator.
\begin{remark}
    Boundary conditions~\eqref{eq:no_flux_bc_hd} are satisfied if $\hat{\rho}$ and $\mathbf{U}$ are periodic, or under reflective boundary conditions, i.e.,
    \begin{align}\label{eq:reflective_bc_hd}
        \begin{cases}
            \left[D\nabla\hat{\rho}(\mathbf{x},t) - \hat{\rho}(\mathbf{x},t)\mathbf{k}\right] \cdot \hat{\mathbf{n}}=0,\\[6pt]
            \left[\hat{\rho}(\mathbf{x},t)\mathbf{U}(\mathbf{x},t)\right] \cdot \hat{\mathbf{n}}=0,
        \end{cases}
    \end{align}
\end{remark}

\subsubsection{Macroscopic control design}
Let us define the error function $e = \rho^{\mathrm{d}} - \hat{\rho}$. Its dynamics obey
\begin{multline}\label{eq:err_dynamics_hd}
    e_t(\mathbf{x}, t) = q(\mathbf{x}, t) + \mathbf{k} \cdot \nabla\rho^\mathrm{d}(\mathbf{x}) 
    - \mathbf{k} \cdot \nabla e(\mathbf{x}, t) \\
    - D\nabla^2\rho^\mathrm{d}(\mathbf{x}) + D\nabla^2 e(\mathbf{x}, t),
\end{multline}
where we used standard vector identities and defined
\begin{align}
    q(\mathbf{x}, t) = \nabla\cdot\left[\hat{\rho}(\mathbf{x}, t)\, \mathbf{U}(\mathbf{x},t)\right].
\end{align}
Boundary conditions for~\eqref{eq:err_dynamics_hd} follow from~\eqref{eq:no_flux_bc_hd}. Assuming the desired density satisfies compatible boundary conditions, they take the form
\begin{align}\label{eq:err_BC_hd}
    \begin{cases}
        \displaystyle\int_{\partial\Omega}\left[D\nabla e(\mathbf{x}, t) - e(\mathbf{x}, t)\mathbf{k}\right]\cdot\hat{\mathbf{n}} \,\mathrm{d}\mathbf{x} = 0,\\[6pt]
        \displaystyle\int_{\partial\Omega}\left[\hat{\rho}(\mathbf{x}, t)\, \mathbf{U}(\mathbf{x}, t)\right]\cdot\hat{\mathbf{n}}\,\mathrm{d}\mathbf{x} = 0.
    \end{cases}
\end{align}

\begin{theorem}[Global exponential macroscopic convergence]\label{thm:convergence_hd}
    Choose $q$ in~\eqref{eq:err_dynamics_hd} as 
    \begin{align}\label{eq:q_hd}
        q(\mathbf{x}, t) = -k_p\, e(\mathbf{x}, t) - k_s(t)\,\mathrm{sign}[e(\mathbf{x}, t)] + \alpha(t),
    \end{align}
    where $k_p > 0$ is a control gain, $\alpha$ is any bounded function of time, and $k_s$ is a time-varying control gain satisfying
    \begin{align}\label{eq:condition_ks_2D}
        k_s(t) > A + H(t),
    \end{align}
    with
    \begin{subequations}
        \begin{align}
            A &= \sum_{i=1}^{n} \left( K_i \lVert \rho_{x_i}^\mathrm{d}(\cdot) \rVert_\infty + D\, \lVert \rho^\mathrm{d}_{x_i x_i}(\cdot) \rVert_\infty \right),\label{eq:A_hd}\\
            H(t) &= \sum_{i=1}^{n} K_i \lVert e_{x_i}(\cdot, t) \rVert_\infty.\label{eq:H(t)}
        \end{align}
    \end{subequations}
    Then, the solution of~\eqref{eq:err_dynamics_hd} converges globally and exponentially to zero in $\mathcal{L}^2(\Omega)$, with a convergence rate upper bounded by~$k_p$.
\end{theorem}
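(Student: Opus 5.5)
The proof mirrors that of Theorem~\ref{th:convergence_1d}, with integration by parts replaced by the divergence theorem. I would take the Lyapunov functional $V(t)=\frac12\int_\Omega e^2\,\mathrm{d}\mathbf{x}$ and differentiate it along \eqref{eq:err_dynamics_hd}:
\begin{equation*}
V_t=\int_\Omega e q\,\mathrm{d}\mathbf{x}+\int_\Omega e\,\mathbf{k}\cdot\nabla\rho^{\mathrm d}\,\mathrm{d}\mathbf{x}-D\int_\Omega e\nabla^2\rho^{\mathrm d}\,\mathrm{d}\mathbf{x}+D\int_\Omega e\nabla^2 e\,\mathrm{d}\mathbf{x}-\int_\Omega e\,\mathbf{k}\cdot\nabla e\,\mathrm{d}\mathbf{x},
\end{equation*}
with $\mathbf{k}$ understood as $\hat{\mathbf{k}}$ for each of the sign combinations. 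The terms are then grouped exactly as in the one-dimensional case.

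\textbf{Reference terms.} Write $\mathbf{k}\cdot\nabla\rho^{\mathrm d}=\sum_i \hat k_i\rho^{\mathrm d}_{x_i}$ and $\nabla^2\rho^{\mathrm d}=\sum_i\rho^{\mathrm d}_{x_ix_i}$. Applying H\"older's inequality as in \eqref{eq:bound_holder} to each summand gives the bound $A\int_\Omega|e|\,\mathrm{d}\mathbf{x}$, with $A$ as in \eqref{eq:A_hd}. This bound holds for every sign choice, because only $|\hat k_i|=K_i$ enters.

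\textbf{Diffusion term.} By Green's identity,
\begin{equation*}
D\int_\Omega e\nabla^2e\,\mathrm{d}\mathbf{x}=D\int_{\partial\Omega}e\nabla e\cdot\hat{\mathbf n}\,\mathrm{d}\mathbf{x}-D\int_\Omega|\nabla e|^2\,\mathrm{d}\mathbf{x},
\end{equation*}
and the last integral is dropped since it is nonpositive.

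\textbf{Advection term.} Use $e\,\mathbf{k}\cdot\nabla e=\frac12\nabla\cdot(e^2\mathbf{k})$. The divergence theorem turns it into a boundary integral, which combines with the diffusion boundary term into
\begin{equation*}
\int_{\partial\Omega}e\left(D\nabla e-\tfrac12 e\mathbf{k}\right)\cdot\hat{\mathbf n}\,\mathrm{d}\mathbf{x}.
\end{equation*}
Adding and subtracting $\frac12\int_{\partial\Omega}e^2\mathbf{k}\cdot\hat{\mathbf n}$, as in \eqref{eq:boud_gigante}, produces the flux combination $e(D\nabla e-e\mathbf{k})\cdot\hat{\mathbf n}$ plus a remainder $\frac12\int_{\partial\Omega}e^2\mathbf{k}\cdot\hat{\mathbf n}$. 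Converting the remainder back to a volume integral gives $\int_\Omega e\,\mathbf{k}\cdot\nabla e$, and
\begin{equation*}
\int_\Omega e\,\mathbf{k}\cdot\nabla e\,\mathrm{d}\mathbf{x}\le\sum_i K_i\|e_{x_i}\|_\infty\|e\|_1=H(t)\int_\Omega|e|\,\mathrm{d}\mathbf{x}.
\end{equation*}

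\textbf{Main obstacle.} The delicate step is the flux boundary term $\int_{\partial\Omega}e(D\nabla e-e\mathbf{k})\cdot\hat{\mathbf n}$. Condition \eqref{eq:err_BC_hd} only says the flux integrates to zero, not that $e$ times the flux does. I would argue, as the paper implicitly does in 1D, under the two admissible cases. For periodic conditions, all boundary terms cancel between opposite faces of $[-a,a]^n$, since the outward normals are opposite and the traces agree. For reflective conditions \eqref{eq:reflective_bc_hd}, the flux vanishes pointwise on $\partial\Omega$, so the integrand is zero. In both cases the term vanishes, and combining the bounds yields
\begin{equation*}
V_t\le\int_\Omega eq\,\mathrm{d}\mathbf{x}+(A+H)\int_\Omega|e|\,\mathrm{d}\mathbf{x}.
\end{equation*}

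\textbf{Conclusion.} Substitute \eqref{eq:q_hd} and use $e\,\mathrm{sign}(e)=|e|$. The $\alpha$ term drops because $\int_\Omega e\,\mathrm{d}\mathbf{x}=0$ by mass conservation. This leaves
\begin{equation*}
V_t\le-2k_pV+(A+H-k_s)\int_\Omega|e|\,\mathrm{d}\mathbf{x}\le-2k_pV
\end{equation*}
under \eqref{eq:condition_ks_2D}. Gr\"onwall's inequality then gives $\|e(\cdot,t)\|_2^2\le\|e(\cdot,0)\|_2^2\exp(-2k_pt)$, which in particular yields the stated exponential convergence with rate bounded by $k_p$. Well-posedness would again be read in the Filippov or regularized sense of Remark~\ref{rem:wellposedness}.
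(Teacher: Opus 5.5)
Your proposal is correct and follows the paper's proof essentially step for step: H\"older bounds on the reference terms, Green's identity and the divergence theorem, adding and subtracting $\frac12\int_{\partial\Omega}e^2\mathbf{k}\cdot\hat{\mathbf n}\,\mathrm{d}\mathbf{x}$, and bounding the remainder by $H(t)\int_\Omega|e|\,\mathrm{d}\mathbf{x}$. Your case-by-case treatment of $\int_{\partial\Omega}e(D\nabla e-e\mathbf{k})\cdot\hat{\mathbf n}\,\mathrm{d}\mathbf{x}$ (cancellation between opposite faces in the periodic case, and pointwise vanishing of the error flux in the reflective case with $\rho^{\mathrm d}$ pointwise compatible) is more careful than the paper's step (f), which cites only the integral condition \eqref{eq:err_BC_hd}; your bound $\|e(\cdot,t)\|_2^2\le\|e(\cdot,0)\|_2^2\exp(-2k_pt)$ is also a valid sharpening of the paper's constant.
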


\begin{proof}
We introduce the candidate Lyapunov functional $V = \tfrac{1}{2}\lVert e \rVert_2^2$, whose time derivative is (dropping explicit functional dependencies for brevity)
\begin{multline}\label{eq:lyap_func_hd}
    V_t = \int_\Omega e\, e_t\,\mathrm{d}\mathbf{x} = \int_\Omega e\, q\,\mathrm{d}\mathbf{x} + \int_\Omega e\,\mathbf{k}\cdot\nabla\rho^\mathrm{d}\,\mathrm{d}\mathbf{x} - \int_\Omega e\,\mathbf{k}\cdot\nabla e\,\mathrm{d}\mathbf{x} \\
    - D\int_\Omega e\,\nabla^2\rho^\mathrm{d}\,\mathrm{d}\mathbf{x} + D\int_\Omega e\,\nabla^2 e\,\mathrm{d}\mathbf{x},
\end{multline}
where we substituted~\eqref{eq:err_dynamics_hd}.

We establish the bounds
\begin{subequations}\label{eq:bounds_hd}
    \begin{align}
        \left\lvert \int_\Omega e\,\mathbf{k}\cdot\nabla\rho^\mathrm{d}\,\mathrm{d}\mathbf{x} \right\rvert &\leq \lVert e \rVert_1 \sum_{i=1}^{n} K_i \lVert \rho_{x_i}^\mathrm{d} \rVert_\infty, \label{eq:bound1}\\
        \left\lvert D\int_\Omega e\,\nabla^2\rho^\mathrm{d}\,\mathrm{d}\mathbf{x} \right\rvert &\leq D\lVert e \rVert_1 \sum_{i=1}^{n} \lVert \rho^\mathrm{d}_{x_i x_i} \rVert_\infty, \label{eq:bound2}
    \end{align}
\end{subequations}
where we applied~\eqref{eq:bound_holder}. Substituting~\eqref{eq:bounds_hd} into~\eqref{eq:lyap_func_hd} yields
\begin{multline}\label{eq:Lyap_bound_hd}
    V_t \leq \int_\Omega e\, q\,\mathrm{d}\mathbf{x} - \int_\Omega e\,\mathbf{k}\cdot\nabla e\,\mathrm{d}\mathbf{x}
    + D\int_\Omega e\,\nabla^2 e\,\mathrm{d}\mathbf{x} \\
    + A \int_\Omega \lvert e \rvert\,\mathrm{d}\mathbf{x},
\end{multline}
where $A$ is defined in~\eqref{eq:A_hd}.

We then establish a sequence of useful bounds. Specifically, we note that
\begin{multline}\label{eq:huge_bound_1}
    D\int_\Omega e\,\nabla^2 e\,\mathrm{d}\mathbf{x} - \int_\Omega e\,\mathbf{k}\cdot\nabla e\,\mathrm{d}\mathbf{x}
    \\\overset{(\mathrm{a})}{=} D\int_\Omega e\,\nabla\cdot\nabla e\,\mathrm{d}\mathbf{x} - \frac{1}{2}\int_\Omega \mathbf{k}\cdot\nabla e^2\,\mathrm{d}\mathbf{x}
    \\\overset{(\mathrm{b})}{=} D\int_{\partial\Omega} e\,\nabla e\cdot\hat{\mathbf{n}}\,\mathrm{d}\mathbf{x} - D\int_\Omega \nabla e\cdot\nabla e\,\mathrm{d}\mathbf{x} - \frac{1}{2}\int_\Omega \nabla\cdot(e^2\mathbf{k})\,\mathrm{d}\mathbf{x},
\end{multline}
where (a) uses the identities $\nabla^2 e = \nabla\cdot\nabla e$ and $-e\,\mathbf{k}\cdot\nabla e = -\tfrac{1}{2}\,\mathbf{k}\cdot\nabla e^2$, while (b) applies the multidimensional integration by parts formula for terms of the form $u\,\nabla\cdot\mathbf{F}$ and the identity $\mathbf{k}\cdot\nabla e^2 = \nabla\cdot(\mathbf{k}\, e^2)$ (valid since $\mathbf{k}$ is constant). For the right-hand side of~\eqref{eq:huge_bound_1}, we have
\begin{multline}\label{eq:huge_bound_2}
    D\int_{\partial\Omega} e\,\nabla e\cdot\hat{\mathbf{n}}\,\mathrm{d}\mathbf{x} - D\int_\Omega \nabla e\cdot\nabla e\,\mathrm{d}\mathbf{x} - \frac{1}{2}\int_\Omega \nabla\cdot(e^2\mathbf{k})\,\mathrm{d}\mathbf{x} \\
    \overset{(\mathrm{c})}{\leq} D\int_{\partial\Omega} e\,\nabla e\cdot\hat{\mathbf{n}}\,\mathrm{d}\mathbf{x} - \frac{1}{2}\int_\Omega \nabla\cdot(e^2\mathbf{k})\,\mathrm{d}\mathbf{x}
    \\\overset{(\mathrm{d})}{=} D\int_{\partial\Omega} e\,\nabla e\cdot\hat{\mathbf{n}}\,\mathrm{d}\mathbf{x} - \frac{1}{2}\int_{\partial\Omega} e^2\mathbf{k}\cdot\hat{\mathbf{n}}\,\mathrm{d}\mathbf{x} - \frac{1}{2}\int_{\partial\Omega} e^2\mathbf{k}\cdot\hat{\mathbf{n}}\,\mathrm{d}\mathbf{x}
    \\+ \frac{1}{2}\int_{\partial\Omega} e^2\mathbf{k}\cdot\hat{\mathbf{n}}\,\mathrm{d}\mathbf{x}
    \\\overset{(\mathrm{e})}{=} \int_{\partial\Omega} e\left(D\nabla e - e\,\mathbf{k}\right)\cdot\hat{\mathbf{n}}\,\mathrm{d}\mathbf{x} + \frac{1}{2}\int_{\partial\Omega} e^2\mathbf{k}\cdot\hat{\mathbf{n}}\,\mathrm{d}\mathbf{x}
    \\\overset{(\mathrm{f})}{=} \frac{1}{2}\int_{\partial\Omega} e^2\mathbf{k}\cdot\hat{\mathbf{n}}\,\mathrm{d}\mathbf{x},
\end{multline}

\begin{figure*}[!t]
    \centering
    \includegraphics[width=1\linewidth]{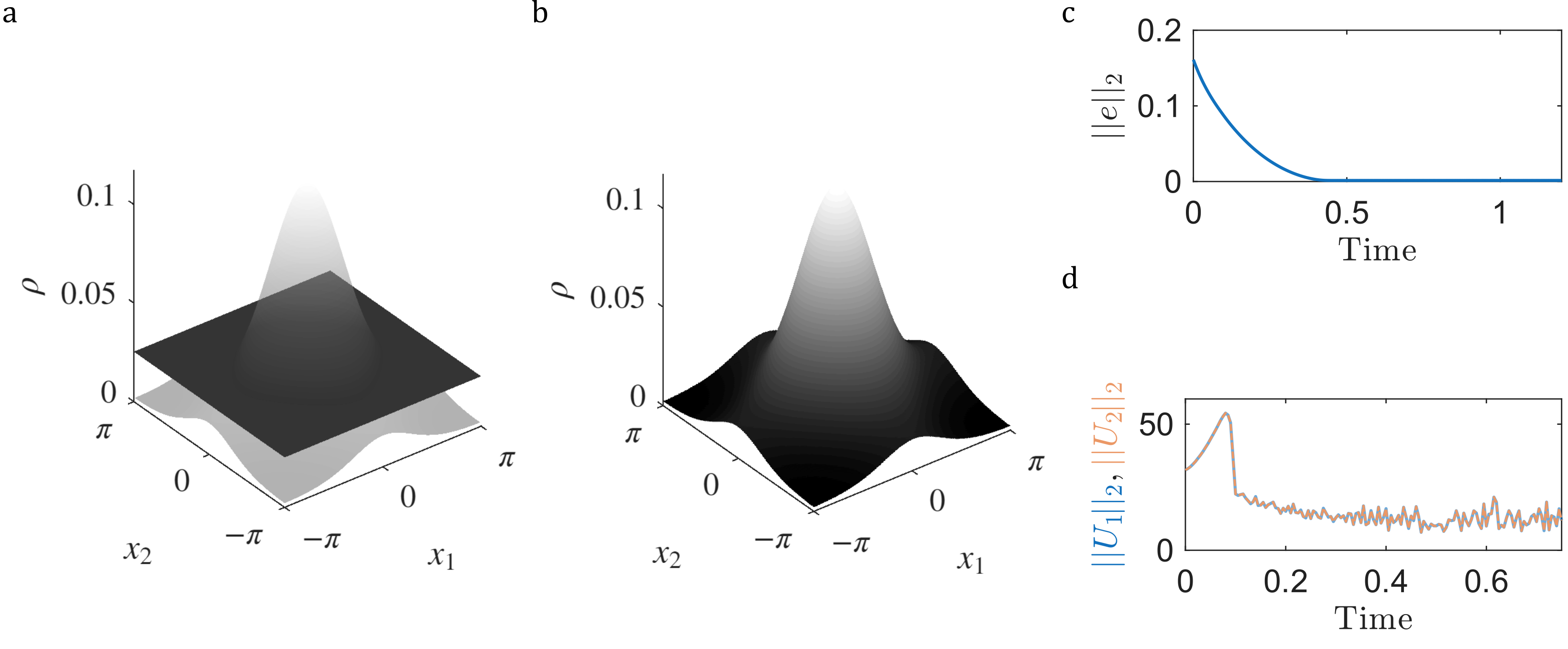}
    \caption{\textbf{Control of a 2-dimensional Fokker--Planck equation.} \textbf{a.}~Plot of the desired density (shaded surface) and initial density $\rho(\mathbf{x},0)$ (solid surface). Darker colors correspond to lower density values, while lighter colors indicate higher densities.
    \textbf{b.}~Plot of the desired density (shaded surface) and final density $\rho(\mathbf{x},T_f)$ (solid surface), where $T_f$ denotes the final simulation time.
    \textbf{c.}~Evolution of $\lVert e \rVert_2$ over time.
    \textbf{d.}~Evolution in time of the $\mathcal{L}^2$ norm of the first (blue solid line) and second (orange dashed line) component of the velocity field $\mathbf{U}(\mathbf{x},t)$.}
    \label{fig:2D_Validation_Continuum}
\end{figure*}
where
\begin{itemize}
    \item in (c), we exploited that $-D\int_\Omega \nabla e\cdot\nabla e\,\mathrm{d}\mathbf{x} \leq 0$;
    \item in (d), we applied the divergence theorem to the second term and added and subtracted $\tfrac{1}{2}\int_{\partial\Omega} e^2\mathbf{k}\cdot\hat{\mathbf{n}}\,\mathrm{d}\mathbf{x}$;
    \item in (e), we rearranged the terms;
    \item in (f), we applied boundary conditions~\eqref{eq:err_BC_hd}.
\end{itemize}
Combining~\eqref{eq:huge_bound_1} and~\eqref{eq:huge_bound_2} yields
\begin{align}\label{eq:huge_bound}
    D\int_\Omega e\,\nabla^2 e\,\mathrm{d}\mathbf{x} - \int_\Omega e\,\mathbf{k}\cdot\nabla e\,\mathrm{d}\mathbf{x} \leq \frac{1}{2}\int_{\partial\Omega} e^2\mathbf{k}\cdot\hat{\mathbf{n}}\,\mathrm{d}\mathbf{x}.
\end{align}
For the right-hand side of~\eqref{eq:huge_bound}, we observe that
\begin{multline}\label{eq:small_bound_hd}
    \left\lvert \frac{1}{2}\int_{\partial\Omega} e^2\mathbf{k}\cdot\hat{\mathbf{n}}\,\mathrm{d}\mathbf{x} \right\rvert = \frac{1}{2}\left\lvert \int_\Omega \nabla\cdot(e^2\mathbf{k})\,\mathrm{d}\mathbf{x} \right\rvert
    \\= \sum_{i=1}^{n} \left\lvert \int_\Omega K_i\, e\, e_{x_i}\,\mathrm{d}\mathbf{x} \right\rvert \leq \sum_{i=1}^{n} \lVert K_i\, e\, e_{x_i} \rVert_1
    \\\leq \lVert e \rVert_1 \sum_{i=1}^{n} K_i \lVert e_{x_i} \rVert_\infty,
\end{multline}
where we applied the divergence theorem, the definition of the $\mathcal{L}^1(\Omega)$ norm, and~\eqref{eq:bound_holder}.

Substituting~\eqref{eq:small_bound_hd} into~\eqref{eq:huge_bound}, and then into~\eqref{eq:Lyap_bound_hd}, yields
\begin{align}\label{eq:bound_on_Vt}
    V_t \leq \int_\Omega e\, q\,\mathrm{d}\mathbf{x} + \left[A + H\right] \int_\Omega \lvert e \rvert\,\mathrm{d}\mathbf{x},
\end{align}
where $H$ is defined in~\eqref{eq:H(t)}.

Hence, choosing $q$ as in~\eqref{eq:q_hd} ensures global exponential stability in $\mathcal{L}^2(\Omega)$---for more details, see the one-dimensional setting in Theorem~\ref{th:convergence_1d}.
\end{proof}
\begin{remark}
    Similarly to the one-dimensional setting, considering a periodic domain with periodic boundary conditions on $\hat{\rho}$ and $U$ simplify derivations. Specifically, it is easy to verify that
    \begin{align}
        \frac{1}{2}\int_{\partial\Omega}e^2(\mathbf{x},t)\mathbf{k}\cdot\hat{\mathbf{n}} \,\mathrm{d}\mathbf{x}=0,
    \end{align}
    ensuring that \eqref{eq:huge_bound} simplifies to
    \begin{align}
        D\int_\Omega e(\mathbf{x},t)\nabla^2e(\mathbf{x},t)\,\mathrm{d}\mathbf{x} - \int_\Omega e(\mathbf{x},t)\mathbf{k} \cdot \nabla e(\mathbf{x},t)\,\mathrm{d}\mathbf{x} \leq 0.
    \end{align}
    Consequently, \eqref{eq:bound_on_Vt} simplifies to
    \begin{align}
        V_t(t) \leq \int_\Omega e(\mathbf{x},t) q(\mathbf{x},t) \,\mathrm{d}\mathbf{x} + A \int_\Omega\vert e(\mathbf{x},t) \vert\,\mathrm{d}\mathbf{x},
    \end{align}
    requiring only
    \begin{align}
        k_s(t) = k_s>A
    \end{align}
    in \eqref{eq:q_hd} to ensure global asymptotic convergence.
\end{remark}

\begin{remark}
    Similarly to the one-dimensional setting, $\alpha$ needs to be chosen to enforce the control action fulfills boundary conditions. Details on how to choose $\alpha$ are derived in the next section. 
\end{remark}

\subsubsection{Discretization} \label{sec:discretization_2d}
The relation $q=\nabla \cdot \mathbf{\Phi}$, where $\mathbf{\Phi} = \hat{\rho}\mathbf{U}$, does not suffice to uniquely determine $\mathbf{U}$ from $q$, being $q$ a scalar function and $\mathbf{U}$ a vector field. 
For this reason, as done in \cite{maffettone2024mixed} for a related problem, we impose a zero curl condition to close the problem\footnote{Such a choice is arbitrary and other closures to the problem can be used.}. This results in 
\begin{subequations}\label{eq:div_curl}
    \begin{align}
        &\nabla \cdot\mathbf{\Phi}(\mathbf{x},t) = q(\mathbf{x},t )\\
        &\nabla \times \mathbf{\Phi}(\mathbf{x},t) =0
    \end{align}
\end{subequations}
with $\int_{\partial\Omega}\Phi \cdot \hat{\mathbf{n}} = 0$ on $\partial\Omega$ (to cope with boundary conditions \eqref{eq:no_flux_bc_hd}). Introducing the scalar potential $\xi$, such that $\mathbf{\Phi}=-\nabla \xi$, and being $\Omega$ simply connected, \eqref{eq:div_curl} can be recast as the Poisson equation
\begin{align}\label{eq:poisson}
    \nabla^2\xi(\mathbf{x}, t) = -q(\mathbf{x}, t) 
\end{align}
with $\int_{\partial\Omega}\nabla\xi \cdot \hat{\mathbf{n}} = 0$ on $\partial\Omega$. For the problem to admit a solution, we need to fulfill the compatibility condition $\int_\Omega q\,\mathrm{d}\mathbf{x} =0$ \cite{evans2022partial}. In particular, integrating the left-hand side of \eqref{eq:poisson}, applying the divergence theorem and boundary conditions, we find
\begin{align}
    \int_\Omega \nabla^2\xi\,\mathrm{d}\mathbf{x} = \int_\Omega \nabla\cdot\nabla\xi\,\mathrm{d}\mathbf{x} = \int_{\partial\Omega} \nabla\xi\cdot\hat{\mathbf{n}} \,\mathrm{d}\mathbf{x} = 0.
\end{align}
For the right-hand side of \eqref{eq:poisson} to integrate to 0 as well, it needs to be $\int_\Omega q\,\mathrm{d}\mathbf{x} = 0$.

We can choose $\alpha$ in \eqref{eq:q_hd}  to enforce such a compatibility constraint, that is 
    \begin{align}
        \alpha(t) = \frac{\int_\Omega \mathrm{sign}[e(\mathbf{x}, t)]\,\mathrm{d}\mathbf{x}}{\vert \Omega \vert}.
    \end{align}

Problem \eqref{eq:poisson} can be solved by cosine Fourier series expansion\footnote{For periodic domains and periodic functions, classic Fourier series expansion can be used.}. Specifically, to cope with the boundary condition, we expand $\xi$ into its cosine Fourier series, thus enforcing that $\int_{\partial\Omega}\nabla\xi\cdot\hat{\mathbf{n}} = 0$. This results in (we fix $n=2$ without any loss of generality)
\begin{align}\label{eq:xi_exp}
    \xi(x_1, x_2) = \sum_{h,k = 1}^\infty a_{hk}\cos\left(\frac{h\pi (x_1+a)}{2a}\right) \cos\left(\frac{k\pi (x_2+a)}{2a}\right),
\end{align}
so that the Laplacian can be written as
\begin{align}\label{eq:laplacian_exp}
    &\nonumber\nabla^2\xi(x_1, x_2) = -\sum_{h,k = 1}^\infty a_{hk} \left[\left(\frac{h\pi}{2a}\right)^2+\left(\frac{k\pi}{2a}\right)^2\right]\\
    &\cos\left(\frac{h\pi (x_1+a)}{2a}\right) \cos\left(\frac{k\pi (x_2+a)}{2a}\right)
\end{align}
Note that we are assuming $\xi$ to be evenly replicated outside of $\Omega$, so as to not require any symmetry within $\Omega$. The basis $\cos\left(\frac{h\pi(x_1+a)}{2a}\right), \cos\left(\frac{k\pi(x_2+a)}{2a}\right)$ is complete and orthonormal for functions that are defined in $\Omega$ and evenly replicated outside of $\Omega$.

Similarly we can expand $q$ into its cosine Fourier series
resulting in
\begin{align}\label{eq:q_exp}
    q(x_1, x_2) = \sum_{h,k = 1}^\infty \alpha_{hk}\cos\left(\frac{h\pi (x_1+a)}{2a}\right) \cos\left(\frac{k\pi (x_2+a)}{2a}\right),
\end{align}
where $\alpha_{nm}$ are known Fourier coefficients\footnote{Similarly with respect to $\xi$, we assume $q$ to be defined in $\Omega$ and evenly replicated in each direction outside of $\Omega$}.
Finally by equating \eqref{eq:laplacian_exp} and \eqref{eq:q_exp} we find that $\xi$ is given by \eqref{eq:xi_exp} and the coefficients can be computed starting from those of $q$ as
\begin{align}
    a_{hk} = \frac{\alpha_{hk}}{\left(\frac{h\pi}{2a}\right)^2+\left(\frac{k\pi}{2a}\right)^2}.
\end{align}
For implementation purposes infinite series are to be truncated.

\begin{remark}
    Solving \eqref{eq:poisson} using Fourier Series (classic Fourier series for periodic boundary conditions and cosine Fourier series for reflective boundaries) inherently ensures the resulting control flux $\mathbf{\Phi}$ fulfills boundary conditions.
\end{remark}

{ Once $\mathbf{\Phi}$ is available, the macroscopic velocity field can be recovered as $\mathbf{U} = \mathbf{\Phi}/\hat{\rho}$. The microscopic control inputs for~\eqref{eq:micro_hd} are then obtained by spatial sampling as
\begin{align}\label{eq:spatial_sampling_hd}
    \mathbf{u}_i(t) = \mathbf{U}(\mathbf{x}_i(t), t), \quad i \in \{1, \dots, N\}.
\end{align}
}

\subsection{Numerical validation}
As in the one-dimensional case, we first validate our framework in the continuum macroscopic setting by integrating~\eqref{eq:continuum_model_hd}, corresponding to the regime $N\to\infty$. {Subsequently, in Section~\ref{subsec:swarm_robotics},} we validate the strategy for a finite-size multi-agent system {through a representative swarm robotics application}.

We consider~\eqref{eq:continuum_model_hd} in a two-dimensional periodic domain $\Omega=[-\pi,\pi]^2$, setting $D = 0.1$ and $K_1=K_2=1$. As reference density, we choose the bivariate von~Mises distribution
\begin{multline}
    \rho^\mathrm{d}(x_1, x_2) = \exp\!\big(\kappa_1\cos(x_1-\mu)+\kappa_2\cos(x_2-\nu)\\
    +\cos(x_1-\mu)\cos(x_2-\nu)+\sin(x_1-\mu)\sin(x_2-\nu)\big),
\end{multline}
where $\mu = 0$ and $\nu = 0$ are the means along $x_1$ and $x_2$, respectively, and $\kappa_1 = 1$ and $\kappa_2 = 1$ are concentration parameters. The control gains are $k_p=1$ and $k_s = 1.5\, A$, with $A$ defined in~\eqref{eq:A_hd}.

Equation~\eqref{eq:continuum_model_hd} is integrated using an operator-splitting technique: the advection part is discretized via a Lax--Friedrichs scheme, while diffusion is handled with a Crank--Nicolson method~\cite{leveque2002finite}. The spatial domain is discretized on a $200\times200$ uniform grid with a time step of $\Delta t = 5\cdot10^{-4}$.

Starting from a uniform distribution (see Fig.~\ref{fig:2D_Validation_Continuum}a), the controller~{\eqref{eq:q_hd}} steers the density of the bounding system towards $\rho^{\mathrm{d}}$, as shown by the final density profile in Fig.~\ref{fig:2D_Validation_Continuum}b. We quantitatively assess performance by analyzing the $\mathcal{L}^2(\Omega)$ norms of the control error and of the control velocity field. As expected, the error norm decreases monotonically, approaching zero in approximately $0.4$ time units (see Fig.~\ref{fig:2D_Validation_Continuum}c). As in the one-dimensional setting, the velocity field norm does not vanish when the error reaches zero, owing to the persistent need to compensate for diffusion and the constant drift (see Fig.~\ref{fig:2D_Validation_Continuum}d).

\section{Applications}
\label{sec:applications}

Having established theoretical convergence guarantees and validated them numerically, 
we demonstrate the versatility of the proposed framework through two relevant applications: 
traffic flow regulation and swarm robotics in partially unknown environments. 
These examples illustrate that our control design naturally extends to systems 
with realistic nonlinear dynamics and environmental disturbances, while maintaining 
the macroscopic convergence properties proven in Section~\ref{sec:control_design} and Section~\ref{sec:2D_derivation_and_validation}.

\subsection{Traffic Flow Regulation}
\label{subsec:traffic}
We consider an ensemble of vehicles moving on a ring according to an optimal velocity car-following model~\cite{lazar2016review}. This application demonstrates the framework's ability to regulate traffic systems with realistic microscopic dynamics and time-varying macroscopic objectives, extending beyond the stationary desired densities considered in the previous validation examples. {As in~\cite{fueyo2025continuation}, we consider the case in which all vehicles in the platoon can be controlled.}

Denoting by $p_i$ and $v_i$ the position and velocity of the $i$-th vehicle, respectively, we describe its dynamics as
\begin{subequations}
    \begin{align}
        \dot{p}_i &= v_i,\\
        \tau\, \dot{v}_i &= \left(v_{\mathrm{opt}}(s_i) + u_i\right) - v_i,
    \end{align}
\end{subequations}
where $\tau$ is a relaxation factor, $v_{\mathrm{opt}}(s_i) = v_{\max}\, \frac{\tanh(s_i/\Delta s - \beta) + \tanh\beta}{1 + \tanh\beta}$ is the optimal velocity, $s_i = p_{i+1} - p_i$ is the spacing between successive vehicles, $\Delta s$ is the transition width, and $\beta$ is the form factor. Due to the ring topology, periodic boundary conditions are enforced on the vehicle positions, with $p_i \in [-\pi, \pi]$ for simplicity. Additionally, the speed of each vehicle is bounded, i.e., $v_i \in [0, v_{\max}]$, where $v_{\max}$ denotes the maximum allowed velocity, resulting in reflective boundary conditions for the velocity.

In the limit $\tau \to 0$, the vehicle dynamics reduce to
\begin{equation}
    \dot{p}_i = v_{\mathrm{opt}}(s_i) + u_i,
\end{equation}
subject to the constraint $v_{\mathrm{opt}}(s_i) + u_i \in [0, v_{\max}]$.

We recast the vehicle dynamics in the form of~\eqref{eq:micro_system} by letting $x_i = p_i$ and setting $g_i(X(t), t) = v_{\mathrm{opt}}(s_i(x_i, x_{i+1}))$. Here, $g_i$ captures the effect of surrounding vehicles on vehicle~$i$. Since $v_{\mathrm{opt}} \in [0, v_{\max}]$ and $x_i \in [-\pi,\pi]$, boundedness of $g_i$ follows directly, and choosing $K = \, v_{\max}$ provides a uniform bound on the drift of each vehicle. Note that no stochastic term is present in this scenario.

We simulate a network of $N = 20\,000$ vehicles on the ring using a second-order Runge--Kutta method with time step $\Delta t = 0.01$, maximum velocity $v_{\max} = 10$, relaxation factor $\tau = 0.01$, and optimal headway $\beta = 0.5$. The control gains are set to $k_p = 10$ and $k_s(t) = 1.5\, A + K\lVert e_x \rVert_\infty$, with $K = v_{\max}$, which satisfies the hypotheses of Theorem~\ref{th:convergence_1d}. As desired velocity distribution, we choose~\eqref{eq:Von_Mieses_Distribution} with time-varying mean $\mu(t) = 0.5\, t$ and concentration $\kappa = 0.5$. To accommodate the time-varying reference $\rho^\mathrm{d}$, we incorporate a feedforward term accounting for $\rho_t^\mathrm{d}$ in $q$, as described in Section~\ref{sec:Tracking_problem}.
\begin{figure*}[!t]
    \centering
    \includegraphics[width=1\linewidth]{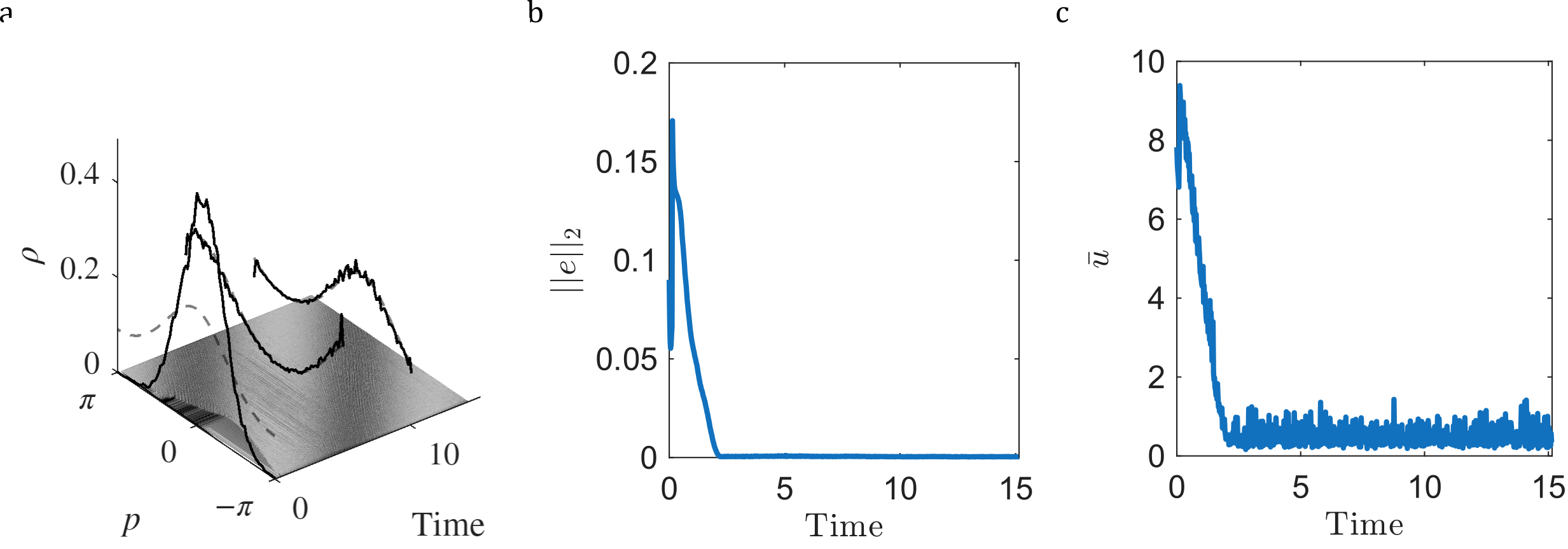}
    \caption{\textbf{Control of vehicles moving on a ring}.  \textbf{a}. Evolution in time an space of all the agents in the ensemble (x and y axes). On the z axis the estimated (solid) and desired (dashed) densities are displayed in four representative time instants. 
    \textbf{b}. Evolution of the $\mathcal{L}^2$ norm of the control error in time.
    \textbf{c}. Evolution of the average velocity of the vehicles over time. This metric was computed as $\bar u(t) = 1/N \sum_{i=1}^{N}{|u_i(t)|}$, with $N$ being the number of vehicles in the group.  
    }
    \label{fig:1D_Traffic}
\end{figure*}

Given the finite number of agents, {the macroscopic control input~\eqref{eq:q} is computed by estimating} the density through normalization and Gaussian kernel filtering of the histogram of agents' velocities on the spatial grid. {The resulting $q$ is then integrated in space to recover $U$, which is discretized via~\eqref{eq:spatial_sampling}.}

As shown in Fig.~\ref{fig:1D_Traffic}a, the vehicle distribution is rapidly steered towards the desired profile, achieving stabilization despite the presence of nonlinear unmodeled dynamics. The $\mathcal{L}^2(\Omega)$ norm of the control error, reported in Fig.~\ref{fig:1D_Traffic}b, reaches a steady-state value of $4.24 \times 10^{-4}$ in approximately $2.2$ time units. This residual mismatch is attributed to finite-size effects arising from the limited number of agents. Notice that, due to the presence of a strong saturation in the control input, during the transient there is a short time interval where the $\mathcal{L}^2(\Omega)$ of the control error increases. 

We complement the analysis by examining the control effort, quantified by the time evolution of the average control input $\bar{u}(t) = \frac{1}{N}\sum_{i=1}^{N} |u_i(t)|$. As shown in Fig.~\ref{fig:1D_Traffic}c, the average control input decreases as the density approaches $\rho^\mathrm{d}$ but does not vanish at steady state.

\begin{figure*}[!t]
    \centering
    \includegraphics[width=1\linewidth]{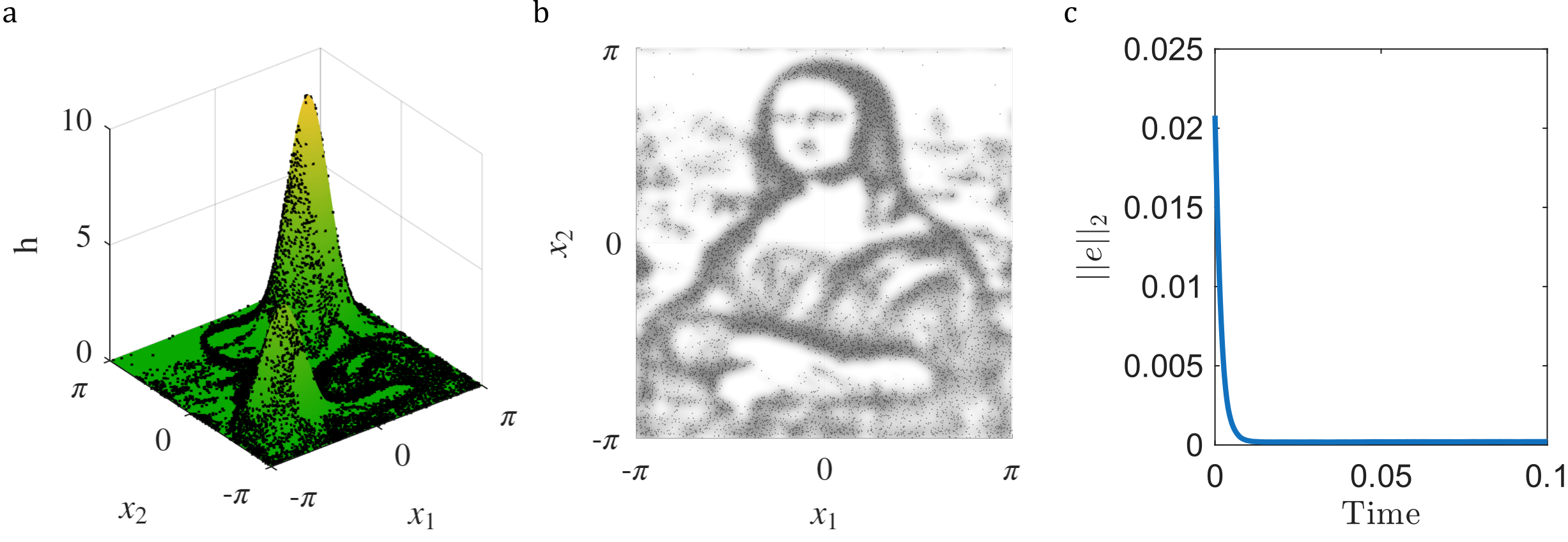}
    \caption{\textbf{Control of an ensemble of Unmanned Ground Vehicles} \textbf{a.} position of the unmanned ground vehicles (represented as black dots) in the hilly landscape. The elevation of the ground is graphically represented as a surface with a color map where the color shifts from green to orange as the ground elevation increases.
    \textbf{b.} Agents positions at the final simulation step against the reference density. The reference density is portrayed as a greyscale shaded surface where the higher is the density the darker is the color.
    \textbf{c.} Evolution of $\mathcal{L}^2$ norm of the control error in time.}
    \label{fig:2D_Validation_UGV}
\end{figure*}

\subsection{Swarm Robotics in Partially Unknown Environments}
\label{subsec:swarm_robotics}
We consider a swarm of unmanned ground vehicles (UGVs) tasked with achieving a prescribed spatial formation in an environment with unknown terrain elevation. This scenario is representative of applications such as foraging (search and rescue) \cite{ornia2022mean}, environmental monitoring \cite{kakalis2008robotic}, and collaborative construction \cite{petersen2019review},where the terrain profile is not fully known in advance.

Assuming negligible robot inertia and an inner control loop compensating for nonholonomic constraints and friction, the dynamics of the $i$-th robot are given by
\begin{subequations}
    \begin{align}
        \dot{x}_{1,i} &= u_{1,i} + g_1(x_{1,i}, x_{2,i}), \\
        \dot{x}_{2,i} &= u_{2,i} + g_2(x_{1,i}, x_{2,i}),
    \end{align}
\end{subequations}
where $g_1$ and $g_2$ model the effect of the uneven terrain on the speed of each robot.

We consider a periodic domain $\Omega = [-\pi, \pi]^2$ representing a hilly landscape, in which the terrain-induced disturbance is captured by the gradient of a potential whose maxima coincide with the hilltops. Specifically,
\begin{multline}\label{eq:dist_potential}
    \mathbf{G}(x_1, x_2) = [g_1(x_1, x_2),\, g_2(x_1, x_2)]^\top \\
    = \nabla\Big(h_1\, e^{-2[(x_1 - \frac{\pi}{2})^2 + (x_2 - \frac{\pi}{2})^2]}\\
    + h_2\, e^{-2[(x_1 + \frac{\pi}{2})^2 + (x_2 + \frac{\pi}{2})^2]}\Big),
\end{multline}
representing two hills of height $h_1$ and $h_2$ centered at $(\pi/2,\, \pi/2)$ and $(-\pi/2,\, -\pi/2)$, respectively. We set $h_1 = 5$ and $h_2 = 10$. Choosing $K_1 = h_1$ and $K_2 = h_2$ provides a uniform bound on the disturbance acting on the system.

In this application, we aim to steer the swarm toward a prescribed static spatial pattern. To this end, the desired density $\rho^\mathrm{d}$ is constructed as a smoothed and normalized version of a grayscale image of the Mona Lisa, ensuring nonnegativity and unit mass. The control gains are set to $k_p = 1$ and $k_s = 1.1A$, where $A$ is defined in~\eqref{eq:A_hd}. The Mona Lisa image is selected as a representative complex spatial pattern characterized by complex features and irregular geometry. This choice serves to illustrate the ability of the proposed framework to regulate highly nonconvex and non-separable density profiles under structured environmental disturbances.

{To compute the control input $q$ in~\eqref{eq:q_hd},} we estimate the density by normalizing and applying a two-dimensional Gaussian kernel filter to the histogram of agents' positions on the spatial grid. {The velocity field $\mathbf{U}$ is then recovered by spatially integrating $q$ and discretized via~\eqref{eq:spatial_sampling_hd}.} The domain is discretized on a $200 \times 200$ uniform grid, and the dynamics of $N = 10\,000$ agents are integrated using a forward Euler scheme with time step $\Delta t = 10^{-4}$. Initial positions are drawn from a uniform distribution.

As shown in Fig.~\ref{fig:2D_Validation_UGV}a, at the final simulation instant the robots arrange on the spatially uneven domain to reproduce the Mona Lisa when viewed from above. For a more detailed comparison, Fig.~\ref{fig:2D_Validation_UGV}b overlays the agent positions at steady state with the reference density, portrayed as a shaded grayscale surface, confirming close qualitative agreement between $\rho^\mathrm{d}$ and the spatial distribution of the agents. Finally, the $\mathcal{L}^2(\Omega)$ norm of the control error, shown in Fig.~\ref{fig:2D_Validation_UGV}c, decreases monotonically and settles in approximately $0.01$ time units.

These applications demonstrate the broad applicability of the proposed control framework. The ability to accommodate realistic microscopic dynamics, time-varying objectives, and environmental uncertainties---while maintaining rigorous convergence guarantees---makes the approach well suited for practical deployment in traffic management, robotic coordination, and related domains.

\section{Conclusions}\label{sec:Discussion}

In this work, we proposed a robust macroscopic control framework for large-scale multi-agent systems affected by unmodeled dynamics and disturbances. Within a macro-to-micro control architecture, we designed a feedback law at the continuum level that steers the population density towards any desired profile compatible with the boundary conditions of the domain. The methodology guarantees global exponential convergence of the density tracking error in $\mathcal{L}^2(\Omega)$ for a suitable choice of the control gains. In particular, provided that perturbations are bounded, the proposed control law ensures asymptotic rejection of unknown drift terms at the macroscopic level, thereby establishing population-level robustness margins prior to microscopic implementation.

The controller has been validated numerically, both at the macroscopic (continuum) and microscopic (agent-based) levels, across a range of application-oriented scenarios. The generality and versatility of the framework are confirmed by experiments spanning density tracking of heterogeneous oscillators, density profile control in traffic systems, and formation of complex spatial patterns with swarms of unmanned ground vehicles operating on terrain with unknown elevation. These results indicate that the macroscopic robustness guarantees are preserved under spatial sampling and finite-size effects, reinforcing the viability of the proposed multi-scale control strategy.

Despite these promising results, the proposed approach presents some limitations that motivate further research. First, although our numerical experiments provide strong evidence that finite-size populations can be steered towards the desired density, the convergence proof relies on the mean-field assumption $N \to \infty$. Future work will aim to analytically characterize how macroscopic convergence guarantees translate to the agent-based level within a unified multi-scale analysis. Second, the controller is centralized, requiring knowledge of the density at every point in the domain. Developing decentralized implementations that rely only on local sensing capabilities is an important direction for scalable deployment. Third, the current formulation assumes that the control input acts on all state variables of each agent, which may not be feasible in every application. Investigating controllability conditions and control design under actuation constraints would significantly broaden the class of systems amenable to this approach.

Notwithstanding these limitations, the proposed architecture contributes to the rigorous foundations of multi-scale control of large, uncertain, and stochastic multi-agent systems. By combining continuum design, robustness margins, and systematic micro-level implementation, this framework provides a scalable methodology that can be integrated with safety constraints, learning-based components, and experimental platforms~\cite{giusti2025data, maffettone2024mixed}. A successful experimental demonstration would enable robust coordination of large-scale swarms, with applications in robotics~\cite{dorigo2021swarm}, environmental management~\cite{zahugi2013oil}, and biotechnology~\cite{massana2022rectification}.

\section*{References}
\bibliographystyle{ieeetr}

\end{document}